\documentclass{amsart}
\usepackage{amsmath}
\usepackage{amsthm}
\usepackage{listings}
\usepackage{xcolor}
\usepackage{mathtools}
\usepackage{paralist}
\usepackage[noadjust]{cite}
\usepackage{thmtools,thm-restate}
\usepackage{latexsym}
\usepackage{textcomp}
\usepackage{amssymb}
\usepackage{enumitem}  
\usepackage{hyperref}
\usepackage[mathscr]{euscript}
\usepackage{lineno}

\newtheorem{theorem}{Theorem}[section]
\newtheorem{proposition}[theorem]{Proposition}
\newtheorem{lemma}[theorem]{Lemma}
\newtheorem{corollary}[theorem]{Corollary}
\newtheorem{alphatheorem}{Theorem}

\theoremstyle{definition}
\newtheorem{definition}[theorem]{Definition}

\newtheorem{remark}[theorem]{Remark}

\newtheorem{question}[theorem]{Question}

\setenumerate{label={\normalfont(\textit{\roman*})}, ref={(\textrm{\roman*})},wide}

\newcommand{\ip}[1]{\left\lfloor #1 \right\rfloor }

\newcommand{\nint}[1]{\left\lfloor #1 \right\rceil}

\newcommand{\fpa}[1]{\left\lVert #1 \right\rVert_{\RR/\ZZ}}

\newcommand{\floor}[1]{\left\lfloor #1 \right\rfloor}

\newcommand{\bra}[1]{\left( #1 \right)}

\newcommand{\abs}[1]{\left|#1\right|}
\newcommand{\set}[2]{\left\{ #1 \ \middle| \ #2 \right\} }

\newcommand{\ceil}[1]{\left\lceil #1 \right\rceil}
 
\newcommand{\e}{\varepsilon}
\renewcommand{\a}{\alpha}

\newcommand{\NN}{\mathbb{N}}
\newcommand{\QQ}{\mathbb{Q}}
\newcommand{\ZZ}{\mathbb{Z}}
\newcommand{\RR}{\mathbb{R}}
 
\DeclareMathAlphabet{\mathpzc}{OT1}{pzc}{m}{it}

\newcommand{\mynote}[2]{\noindent{\bfseries\sffamily\scriptsize#1}{\small$\blacktriangleright$\textsf{\textsl{#2}}$\blacktriangleleft$}}
\newcommand\JK[1]{\mynote{JK}{{\color[rgb]{0.5,0.2,0.0} #1}}}
\newcommand\HB[1]{\mynote{HB}{{\color[rgb]{0.0,0.5,0.2} #1}}}

\definecolor{fresh}{HTML}{000000}\definecolor{checked}{HTML}{000000}
\definecolor{external}{HTML}{000000}\definecolor{later}{HTML}{000000}
\definecolor{minor-rev}{HTML}{000000}\definecolor{major-rev}{HTML}{000000}\definecolor{skip}{HTML}{000000}\definecolor{normal}{HTML}{000000}

\renewcommand{\subset}{\subseteq}

\newcommand{\DeltaS}{\operatorname{\mathrlap{\ \circ}\operatorname{\Delta}}}
\newcommand{\DeltaT}{\operatorname{\mathrlap{\ \phantom{\circ}}\operatorname{\Delta}}}
 
\newcommand{\theory}[1]{\mathrm{Th}(#1)}
\newcommand{\inv}[1]{#1^{-1}}
\newcommand{\defeq}{\overset{\mathrm{def}}{=}}

\hyphenation{bra-cket}
 
\newcommand*\patchAmsMathEnvironmentForLineno[1]{\expandafter\let\csname old#1\expandafter\endcsname\csname #1\endcsname
  \expandafter\let\csname oldend#1\expandafter\endcsname\csname end#1\endcsname
  \renewenvironment{#1}{\linenomath\csname old#1\endcsname}{\csname oldend#1\endcsname\endlinenomath}}\newcommand*\patchBothAmsMathEnvironmentsForLineno[1]{\patchAmsMathEnvironmentForLineno{#1}\patchAmsMathEnvironmentForLineno{#1*}}\AtBeginDocument{\patchBothAmsMathEnvironmentsForLineno{equation}\patchBothAmsMathEnvironmentsForLineno{align}\patchBothAmsMathEnvironmentsForLineno{flalign}\patchBothAmsMathEnvironmentsForLineno{alignat}\patchBothAmsMathEnvironmentsForLineno{gather}\patchBothAmsMathEnvironmentsForLineno{multline}}

\begin{document}
 
\author[H.\ Brown]{Hera Brown}
\address{Department of Computer Science, University of Oxford,
Wolfson Building, Parks Road, Oxford OX1 3QD, UK}
\email{hera.brown0707@gmail.com}

\author[J.\ Konieczny]{Jakub Konieczny}
\address{Department of Computer Science, University of Oxford,
Wolfson Building, Parks Road, Oxford OX1 3QD, UK}
\email{jakub.konieczny@gmail.com}

\renewcommand{\JK}[1]{}
\renewcommand{\HB}[1]{}
\renewcommand{\checkmark}{}
 
\title[Hardy field extensions of Presburger arithmetic]{Decidability of extensions of Presburger arithmetic by Hardy field functions}
\date{\today}
 
\begin{abstract}
We study the extension of Presburger arithmetic by the class of sub-polynomial Hardy field functions, and show the majority of these extensions to be undecidable. More precisely, we show that the theory $\theory{\ZZ; <, +, \nint{f}}$, where $f$ is a Hardy field function and $\nint{\cdot}$ the nearest integer operator, is undecidable when $f$ grows polynomially faster than $x$. Further, we show that when $f$ grows sub-linearly quickly, but still as fast as some polynomial, the theory $\theory{\ZZ; <, +, \nint{f}}$ is undecidable.
\end{abstract}
 
\keywords{}
\subjclass[2020]{Primary: 11U05. Secondary: 03B10, 03B25, 11J54.}
 
\maketitle 

\section{$\checkmark$Introduction}
\label{sec:intro}

\newcommand{\Th}{\mathrm{Th}}

\subsection{$\checkmark$Background}
Presburger arithmetic, the first-order theory $\theory{\NN; +}$ of the natural numbers with addition, is known to be decidable \cite{Presburger-1929}, whereas Peano arithmetic, the extension of Presburger arithmetic by multiplication $\theory{\NN; +, \times}$, is known to be undecidable \cite{Church-1936}. The undecidability of Peano arithmetic provides a method that can be used to show that various other extensions of Presburger arithmetic are undecidable; if multiplication can be defined in such an extension, then that extension is undecidable. For instance, the extension of Presburger arithmetic with the squaring function $\theory{\NN; +, x \mapsto x^2}$ is undecidable, since $a = bc$ holds \emph{iff} $a = (b+c)^2 - b^2 - c^2$ (This observation appears to first have been made by Tarski \cite{Tarski-1953}). 

This leads to the question of whether any extension of Presburger arithmetic by a polynomial-like function in general is undecidable. It is easy to see that, for integer-valued polynomials $p$ of degree greater than $2$, the extension of Presburger arithmetic by that polynomial $\theory{\NN; +, p}$ is undecidable. This is because we can define multiplication in the extension in much the same way as we could with the squaring function. B\`es \cite{Bes-2001} surveys further decidable and undecidable extensions of Presburger arithmetic, some of which are shown to be undecidable by defining multiplication in Presburger arithmetic.

\subsection{$\checkmark$New Results}
\label{sec:new-results}
Hardy field functions (with polynomially bounded growth), discussed in detail in Section \ref{sec:hardy}, are a far-reaching generalisation of polynomial sequences which is often studied in combinatorial number theory and ergodic theory, see e.g.\ \cite{Boshernitzan-1981}, \cite{Boshernitzan-1994}, \cite{Frantzikinakis-2009} and \cite{Richter-2022}. They are particularly well-behaved from the point of view of analysis, which often allows one to adapt arguments originally applied to polynomials to this wider class of functions. 

While we postpone proper definition of Hardy field functions, for now we point out that they contain all \emph{logarithmico-exponential functions} --- that is, the functions that can be built up from the basic arithmetical operations over the reals (addition, multiplication, subtraction, division), exponentiation and the taking of logarithms. Thus, for instance
\begin{eqnarray*}
f(x) = x^2 + 3x + 2,&g(x) = \sqrt{x}\log x + \frac{1}{\sqrt[3]{x}}, \text{ and}&h(x) = 2^{x^{3/5} + \log x} - 2
\end{eqnarray*}
are all Hardy field functions.

Since we are interested in extensions of the Presburger arithmetic --- which concerns the integers --- and since Hardy field functions generally take real values we need a way to construct integer-valued variants of Hardy field functions. For this, we recall that for $x \in \RR$ the operator $\nint{x}$ denotes the best integer approximation of $x$, meaning that $\nint{x}-1/2 \leq x < \nint{x}+1/2$. We will also occasionally use the floor and the ceiling functions, $\floor{x}$ and $\ceil{x}$, respectively. For a function $f$ we let $\nint{f}$ denote the function that rounds $f$ to the nearest integer (i.e.\ $\nint{f}(x) = \nint{f(x)} = \ceil{f(x)-(1/2)}$). Another technical issue we need to deal with is that Hardy field functions are generally defined only on an interval $[x_0,\infty)$ rather than on all of $\RR$. For the sake of simplicity, by convention, we extend any Hardy field function to $\RR$ by assigning the value $0$ outside of the domain of definition. We are ultimately interested in evaluating the functions under consideration on large integers, so this issue does not affect the reasoning.

We now consider a question raised in \cite{Konieczny-2024}: Given a Hardy field function whose rate of growth is polynomial and faster than linear, is the first-order theory $\theory{\ZZ; <, +, \nint{f}}$ (Note that $\theory{\ZZ; <, +}$ is equivalent to $\theory{\NN; +}$) decidable? In this paper we answer this question negatively. Formally, we prove the following theorem:
\begin{alphatheorem}
\label{thm:main}
Let $f: [x_0,\infty) \to \RR$ be a Hardy field function such that
\begin{eqnarray}
\lim_{x \to \infty} \frac{f(x)}{x} = \infty &\text{ and } & \lim_{x \to \infty} \frac{f(x)}{x^{d}} = 0
\end{eqnarray}
for an integer $d \geq 2$. Then the first-order theory of $\theory{\ZZ; <, +, \nint{f}}$ is undecidable.
\end{alphatheorem}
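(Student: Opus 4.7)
\medskip\noindent\textbf{Proof proposal.}

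The plan is to reduce Theorem \ref{thm:main} to the sub-linear analogue announced in Section \ref{sec:new-results} by iteratively taking finite differences of $\nint{f}$. The heuristic is that each application of the forward difference $\Delta g(x) \defeq g(x+1) - g(x)$ lowers the polynomial order of a smooth function by one, so after sufficiently many iterations the result has sub-linear growth and the sub-linear case of the main result applies.

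First I would observe that $\Delta$ preserves $\theory{\ZZ; <, +, \nint{f}}$-definability, since it uses only $+$; iterating, $F_k(x) \defeq \Delta^k \nint{f}(x)$ is definable for every $k \geq 1$. Because $f$ lies in a Hardy field, so do all its derivatives, and these form a chain whose asymptotic orders decrease strictly. Combined with the hypotheses $f(x)/x \to \infty$ and $f(x)/x^d \to 0$, I can pick an integer $k \in \{1, \dots, d-1\}$ such that $f^{(k)}(x) \to \infty$ while $f^{(k)}(x)/x \to 0$.

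Next I would show that $F_k(x) = f^{(k)}(x) + O(1)$ for large $x$, using (a) the iterated mean value theorem $\Delta^k f(x) = f^{(k)}(\xi_x)$ for some $\xi_x \in [x, x+k]$, together with Hardy-field asymptotics to compare $f^{(k)}(\xi_x)$ with $f^{(k)}(x)$ once $f^{(k)}$ is sub-linear; and (b) the crude bound $|F_k(x) - \Delta^k f(x)| \leq 2^k$ coming from $|\nint{f} - f| \leq 1/2$. Thus $F_k$ lies within a uniformly bounded distance of the Hardy field function $f^{(k)}$, which itself has sub-linear super-constant polynomial growth.

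Finally, since $\theory{\ZZ; <, +, F_k}$ is a reduct of $\theory{\ZZ; <, +, \nint{f}}$, it suffices to prove the reduct undecidable; and since $F_k$ differs from $\nint{f^{(k)}}$ by a uniformly bounded integer, I would invoke the sub-linear undecidability result applied to $f^{(k)}$. The main obstacle I anticipate is exactly this step: making the sub-linear argument tolerant of an $O(1)$ perturbation of its target function. A plausible workaround is to recover $\nint{f^{(k)}}$ from $F_k$ exactly by case analysis on the finitely many values of the correction $F_k(x) - \nint{f^{(k)}(x)}$, using that its level sets are themselves $+$-definable from $\nint{f}$; alternatively one could formulate the sub-linear undecidability directly for integer-valued functions within $O(1)$ of a Hardy field and avoid the issue altogether.
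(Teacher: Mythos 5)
Your plan is to reduce Theorem~\ref{thm:main} to the sub-linear case (Theorem~\ref{thm:other}) by replacing $\nint{f}$ with an iterated forward difference $F_k = \Delta^k\nint{f}$, chosen so that $f^{(k)}$ has super-constant but sub-linear growth. This runs into three problems, the first of which is structural and fatal as stated.

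\textbf{Circularity.} In this paper the logical dependency runs the other way: Theorem~\ref{thm:other} (the sub-linear case) is \emph{deduced from} Theorem~\ref{thm:main}. The proof in Section~5 defines $b(m)=\min\{n:\nint{f}(n)\geq m\}$, shows $b=\nint{g}$ for $g$ essentially equal to $f^{-1}$, observes that $x\prec g(x)\prec x^{1/\e}$, and then invokes Theorem~\ref{thm:main} applied to $g$. There is no independent proof of the sub-linear case on offer, so a proof of Theorem~\ref{thm:main} that appeals to the sub-linear result is circular. You would need to first supply a self-contained proof of the sub-linear case, and the natural way to do so (inverting $f$ to obtain a super-linear $g$) leads right back to Theorem~\ref{thm:main}.

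\textbf{No suitable $k$ exists in general.} Your choice of $k\in\{1,\dots,d-1\}$ with $f^{(k)}(x)\to\infty$ and $f^{(k)}(x)/x\to 0$ fails whenever $f(x)\sim x^{j}$ for an integer $j$ in $\{2,\dots,d-1\}$ (e.g.\ $f(x)=x^2$): the degrees of the successive derivatives jump from $1$ to $0$ and a sub-linear, super-constant derivative is never reached. The paper treats this integer-exponent regime by a genuinely different argument (Proposition~\ref{prp:integer-exponent}), precisely because here the Taylor coefficients fail to equidistribute mod $1$ and the cleaner $\Pint_d$ machinery breaks down. Even when a $k$ exists with $f^{(k)}$ sub-linear and unbounded, it need not satisfy the lower bound $x^\e\prec f^{(k)}(x)$ that Theorem~\ref{thm:other} requires: take $f(x)=x^2\log x$, for which $f''(x)\sim\log x$ is sub-polynomial.

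\textbf{The $O(1)$ error is not so easily removed.} You correctly observe $|F_k - \Delta^k f|\leq 2^{k-1}$ and $\Delta^k f = f^{(k)} + o(1)$, so $F_k$ is within $O(1)$ of the Hardy field function $f^{(k)}$. But Theorem~\ref{thm:other} is a statement about $\nint{g}$ for a Hardy field $g$, and $F_k$ is not of that form. Your first workaround --- case-splitting on the value of $F_k(x) - \nint{f^{(k)}}(x)$ using that ``its level sets are $+$-definable from $\nint{f}$'' --- does not stand up: those level sets are defined in terms of $\nint{f^{(k)}}$, which is exactly the object you do not have in the language. Your second workaround (re-proving the sub-linear case for arbitrary integer sequences within $O(1)$ of a Hardy field function) is a real project; the paper avoids it entirely by working with $\nint{f}$ directly. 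For comparison, the paper's actual route in the super-linear case is to locate long intervals where $\nint{f}$ agrees \emph{exactly} with an integer-valued polynomial (Lemma~\ref{lem:pint}), which is why no $O(1)$ slack has to be carried through the derivative computations.
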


Similarly, we can also deal with Hardy field functions whose rate of growth is slower than linear but not too slow.
\begin{alphatheorem}
\label{thm:other}
Let $f: [x_0,\infty) \to \RR$ be a Hardy field function such that
\begin{eqnarray}
\lim_{x \to \infty} \frac{f(x)}{x^{\e}} = \infty &\text{ and } & \lim_{x \to \infty} \frac{f(x)}{x} = 0
\end{eqnarray}
for some $\e > 0$. Then the first-order theory of $\theory{\ZZ; <, +, \nint{f}}$ is undecidable.
\end{alphatheorem}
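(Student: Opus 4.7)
My plan is to reduce Theorem~\ref{thm:other} to Theorem~\ref{thm:main} by passing to the compositional inverse of $f$. Since $f(x) \to \infty$ sub-linearly, $f$ is eventually strictly increasing and admits a Hardy field compositional inverse $g := f^{-1}$ (Hardy fields being closed under compositional inversion). The growth hypotheses transfer: $f(x)/x \to 0$ yields $g(x)/x \to \infty$, and $f(x)/x^{\e} \to \infty$ (which forces $\e < 1$) yields $g(x) = o(x^{1/\e})$, so any integer $d \geq 2$ exceeding $1/\e$ (for instance $d := \lceil 1/\e \rceil + 1$) witnesses $g(x)/x^{d} \to 0$. Thus $g$ satisfies the hypotheses of Theorem~\ref{thm:main}, and it suffices to interpret a function equivalent to $\nint{g}$ in $\theory{\ZZ; <, +, \nint{f}}$.

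The natural candidate is
\[
a(m) := \min\{n \in \ZZ : \nint{f}(n) \geq m\},
\]
which is first-order definable via $a(m) = n \iff \nint{f}(n) \geq m \wedge \nint{f}(n-1) < m$. For sufficiently large $m$ the minimum exists because $\nint{f}(n+1) - \nint{f}(n) \in \{0,1\}$ eventually (by the mean value theorem combined with $f' \to 0$, which holds for any sub-linearly growing Hardy field by L'H\^opital's rule). The tie-breaking convention $\nint{x} = \lceil x - 1/2 \rceil$ gives $\nint{y} \geq m \iff y > m - 1/2$, and combined with the monotonicity of $f$ this yields $a(m) = \lfloor g(m - 1/2) \rfloor + 1$. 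Setting $\tilde g(x) := g(x - 1/2) + 1/2$ (still a Hardy field function satisfying Theorem~\ref{thm:main}), a short calculation gives $\nint{\tilde g}(m) = \lceil g(m - 1/2) \rceil$, so $a(m) = \nint{\tilde g}(m)$ at every $m$ with $g(m-1/2) \notin \ZZ$.

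The main obstacle is the ``resonant'' values of $m$ at which $g(m - 1/2) \in \ZZ$ and $a(m)$ exceeds $\nint{\tilde g}(m)$ by one. This can happen infinitely often: taking $f(x) = \sqrt{x} + 1/2$ gives $g(x) = (x-1/2)^{2}$ and $g(m-1/2) = (m-1)^{2} \in \ZZ$ for every $m$. To resolve this I would either (i) replace $\tilde g$ by a perturbation $\tilde g^{*}(x) := g(x - 1/2) + 1/2 + \delta(x)$ with $\delta$ a small Hardy field correction chosen so that $\tilde g^{*}(m) \notin 1/2 + \ZZ$ for all large $m$ (the existence of such $\delta$ is a diophantine statement about Hardy field fractional parts, which should follow from a suitable shift argument within the ambient Hardy field), or (ii) observe that $a$ and $\nint{\tilde g}$ differ by at most $1$ uniformly and absorb this bounded discrepancy into the Tarski-style multiplication-definability argument underlying Theorem~\ref{thm:main}, so that the $O(1)$ error only shifts the definable quadratic-like object by a linear term. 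Either route interprets a theory to which Theorem~\ref{thm:main} applies inside $\theory{\ZZ; <, +, \nint{f}}$, yielding the undecidability claim.
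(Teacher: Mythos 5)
Your proposal follows essentially the same route as the paper: define $a(m) = \min\{n \in \ZZ : \nint{f}(n) \geq m\}$ (the paper's $b(m)$), relate it to a Hardy field function built from $f^{-1}$, and feed that into Theorem~\ref{thm:main}. What you do differently is carry out the rounding bookkeeping carefully, and in doing so you expose a genuine gap in the paper's own Lemma~\ref{lem:agreement-of-inverses}. Unwinding the definitions with $z := f^{-1}(m-1/2)$ gives $b(m) = \lceil z\rceil$ and $\nint{g(m)} = \lfloor z\rfloor + 1$ (or the roles of $\lceil\cdot\rceil$ and $\lfloor\cdot\rfloor+1$ swap under the other half-integer convention), and these disagree precisely when $z \in \ZZ$. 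Your example $f(x) = \sqrt{x}+1/2$ gives $z(m) = (m-1)^2 \in \ZZ$ for every $m$, so the paper's lemma is in fact false for this admissible $f$; the chain of set equalities in its proof fails at the last step for every large $m$. You are right to flag this, and this discrepancy really does need to be addressed for the proof of Theorem~\ref{thm:other} to be complete.

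That said, neither of your proposed repairs is worked out, and repair (i) as written cannot succeed in general. A fixed constant $\delta$ handles the $\sqrt{x}+1/2$ example, where $\{z(m)\}$ is identically $0$; but take instead an $f$ with $f^{-1}(y) = (y+1/2)^2 + (y+1/2)^{3/2}$, so that $z(m) = m^2 + m^{3/2}$ is an integer whenever $m$ is a perfect square while $\{z(m)\} = \{m^{3/2}\}$ is equidistributed and comes arbitrarily close to $1$ on other $m$. Any fixed $\delta \in (0,1)$ then misaligns infinitely many non-resonant $m$, and a non-constant $\delta(m)$ would have to track the diophantine quantity $1 - \{z(m)\}$, which is not itself a Hardy field object. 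Repair (ii) is the more promising route: $a(m)$ lies within distance $1$ of the Hardy field function $z(m)$, and the coarse-multiplication machinery of Section~\ref{sec:super-linear-generalisation} (Lemmas~\ref{lem:error-discr-der} and~\ref{lem:coarse-multiplication}) is designed to absorb bounded errors of exactly this type. But as written that machinery applies only when $z(x) \sim x^{d-1}$; to cover the range $x^{d-1} \prec z \prec x^d$ one would use Remark~\ref{prp:integer-exponent}'s observation that the coarse argument extends, and for $z$ in the near-linear range $x \prec z \prec x^2$ the argument of Lemma~\ref{lem:lambda-conditions-new} relies on exact arithmetic-progression structure with no room for an additional $\pm 1$, so that estimate would need genuine reworking. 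The detection of the gap is the most valuable part of your write-up; the fix still needs to be carried through in all three growth regimes.
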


Broadly, then, the extension of Presburger arithmetic by most polynomial-like Hardy field functions yields an undecidable theory, as expected.

\subsection{$\checkmark$Proof Outline}

We prove Theorem \ref{thm:main} in two parts; we first treat the \emph{super-linear} case, where the Hardy field function $f$ grows faster than a quadratic polynomial, and the \emph{near-linear} case, where $f$ grows faster than a linear polynomial but slower than a quadratic polynomial. Building on these results, we then prove Theorem \ref{thm:other}, or the \emph{sub-linear} case, where $f$ grows slower than a linear function, separately.

In the super-linear case, we define multiplication using the fact that, over specific and arbitrarily large intervals, $f$ can be closely approximated by a Taylor polynomial. By differentiating such polynomials, we can define multiplication between arbitrarily large numbers, and so define multiplication generally.

In the near-linear case, we use the near-linear growth of $f$ to define an interval
where $f$ looks like a linear function. We show that there are sufficiently many of these intervals to define multiplication over a restricted domain. By taking difference sets, we then extend this restricted domain to the whole of $\ZZ \times \ZZ$. This defines multiplication generally.

In the sub-linear case, we use the fact that the inverse of $f$ can be approximated in our theory, as well as the fact that the inverse of $f$ grows faster than $x$, to define a near-linear or super-linear rounded Hardy field function. We then apply Theorem \ref{thm:main}, and undecidability follows.

\subsection{$\checkmark$Future directions}
\label{subsec:future-directions}

\subsubsection{Weakening of conditions}

In proving Theorem \ref{thm:main}, we use only a few properties that Hardy field functions have, namely that they are $d$-times differentiable and have well-behaved Taylor approximations. It would be interesting to see which broader classes of functions also satisfy these requirements, as well as seeing how these requirements may be weakened further (particularly with regards to the equidistribution results that are required in the proof of Theorem \ref{thm:main}).

\subsubsection{Extension to non-polynomial functions}

The proofs above also focus on the case where $f$ grows only polynomially quickly, but it might be interesting to see what non-polynomial Hardy field functions provide undecidable extensions of Presburger arithmetic. It seems that functions that grow exponentially or faster, and their inverses, would provide decidable extensions of Presburger arithmetic by ideas similar to Sem\"enov's \cite{Semenov-1979,Semenov-1983}. But functions that grow between polynomially and exponentially fast seem more interesting.

In particular, it seems that using the function $f(x) = 2^{\sqrt{x}}$ we can define a function with polynomial growth. Taking $\inv{f}(f'(x)) - x$, we get a function that looks similar to $\sqrt{x} \log x$, and by the results of this paper rounding this function and extending Presburger arithmetic by it leads to an undecidable theory. This then raises the question of how far this idea can be generalised:

\begin{question}
Given a Hardy field function $f: [x_0,\infty) \to \RR$ such that
\begin{eqnarray}
\lim_{x \to \infty} \frac{f(x)}{x^n} > \infty &\text{ and } & \lim_{x \to \infty} \frac{f(x)}{m^x} = 0
\end{eqnarray}
for all integers $n$ and $m > 1$, is the first-order theory of $\theory{\ZZ; <, +, \nint{f}}$ decidable?
\end{question}

\subsubsection{Generalisation to other theories}

Given that the theory of Skolem arithmetic, namely $\theory{\NN; \times}$, is decidable \cite{Bes-2001}, it would be interesting to see whether the similar theory of $\theory{\NN; \times, \nint{f}}$ is decidable as well. This relates to a question raised by Korec \cite{Korec-2001} as to whether the theory $\theory{\NN; \times, X}$ is decidable, where $X$ is the image of some polynomial function. In particular, it would be interesting to see whether Hardy field functions can in general be used to define addition as well as multiplication, and so used to show undecidability of the above theory. This leads us to the following question:

\begin{question}
Let $f \colon \RR \to \RR$ be a Hardy field function such that
\begin{eqnarray}
\lim_{x \to \infty} \frac{f(x)}{x^d} = \infty &\text{ and } & \lim_{x \to \infty} \frac{f(x)}{x^{d+1}} = 0
\end{eqnarray}
for an integer $d > 1$. Is the first-order theory of $\theory{\ZZ; \times, \nint{f}}$ decidable?
\end{question}

A further question along these lines would be whether just the theory $\theory{\NN; \nint{f}}$ is decidable as well.

\subsection{$\checkmark$Notation}
\label{sec:notation}
We let $\NN$ be the set of nonnegative integers $\{0,1,2, \dots\}$.

For $x \in \RR$, we define the integer part $\ip{x}$ of $x$ to be $\max \{n \in \ZZ \mid n \leq x\}$. We then define the ceiling of $x$ to be $\ceil{x} = \ip{-x}$ and the nearest integer of $x$ to be $\nint{x} = \ceil{x - 1/2}$. When applying the nearest integer function to a function generally, we write $\nint{f}(x)$ for $\nint{f(x)}$. We also write the circle norm of $x$ as $\fpa{x} = \min \set{\abs{x-n}}{n \in \ZZ}$ to denote the distance of $x$ to the nearest integer.

\subsection*{$\checkmark$Acknowledgements}

The second named author is supported by UKRI Fellowship EP/X033813/1. For the purpose of open access, the authors have applied a Creative Commons Attribution (CC BY) licence to any Author Accepted Manuscript version arising.

\section{$\checkmark$Hardy fields}
\label{sec:hardy}

We define Hardy fields as follows. Let $B$ be a set of equivalence classes of continuous real-valued functions in one variable, where we say that two such functions $f$ and $g$ are equivalent when they eventually agree i.e.\ when there exists some $x_0$ such that, for all $x > x_0$, we have $f(x) = g(x)$. Such equivalence classes are also called \emph{germs}. The set $B$ naturally gives rise to a ring $(B, +, \times)$. We say that a Hardy field is a subfield of the ring $(B, +, \times)$ that is closed under differentiation. A Hardy field function is a function that belongs to a Hardy field, or more precisely whose germ at $\infty$ belongs to the union of all Hardy fields.

The union of all Hardy fields is large enough to include a variety of interesting functions. This includes, as mentioned in Section \ref{sec:new-results}, the class of logarithmico-exponential functions built up from real polynomials, exponentiation, and the taking of logarithms. This gives us both a natural class of examples of Hardy field functions, as well as a natural class of functions to compare general Hardy field functions to.

Hardy field functions exhibit many properties that make them well suited to analytic arguments. As a first instance of this principle, because of the fact that every Hardy field function is asymptotically comparable to $0$, we have the following standard fact. 

\begin{lemma} \label{lem:eventually-increasing-or-decreasing}
	Let $f \colon \RR \to \RR$ be a Hardy field function.
	Then $f$ is either eventually positive, eventually negative, or eventually zero. Likewise, $f$ is either eventually increasing, eventually decreasing, or eventually constant.
\end{lemma}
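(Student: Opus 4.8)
The plan is to exploit the two structural features built into the definition of a Hardy field: that it is a \emph{field} (so every nonzero germ is invertible) and that it is \emph{closed under differentiation}. These reduce the whole statement to the intermediate value theorem and the mean value theorem.

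For the first assertion, I would argue as follows. Since $f$ is a Hardy field function, its germ at $\infty$ lies in some Hardy field $H$. If $f$ is eventually zero, we are in one of the allowed cases, so suppose not; then the germ of $f$ is a nonzero element of $H$, hence invertible, and $1/f \in H$. Elements of $H$ are (germs of) continuous real-valued functions, so there is an $x_1 \ge x_0$ and a continuous $g \colon [x_1,\infty) \to \RR$ representing $1/f$, with $f(x) g(x) = 1$ for all $x \ge x_1$. In particular $f(x) \ne 0$ on $[x_1,\infty)$, and a continuous function that never vanishes on an interval has constant sign there by the intermediate value theorem. Hence $f$ is eventually positive or eventually negative, which together with the excluded case gives the trichotomy.

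For the second assertion, note that $f' \in H$ because Hardy fields are closed under differentiation. Applying the first assertion to $f'$, the derivative is eventually positive, eventually negative, or eventually zero; by the standard monotonicity criterion (the mean value theorem), $f$ is then respectively eventually increasing, eventually decreasing, or eventually constant.

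The argument is essentially routine, and I expect the only point deserving any care to be the passage from ``the germ of $f$ is nonzero'' --- i.e.\ $f$ is merely not eventually zero, which a priori still allows $f$ to have zeros arbitrarily far out --- to ``$f$ is eventually nonvanishing.'' This is precisely what the field axiom supplies: invertibility in the ring of germs forces $1/f$ to be represented by a genuine continuous function on a half-line $[x_1,\infty)$, which is impossible unless $f$ itself avoids $0$ throughout $[x_1,\infty)$. Once that is in hand, everything else follows from elementary real analysis.
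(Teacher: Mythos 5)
Your proof is correct, and it is the standard argument. Worth noting: the paper itself gives no proof of this lemma --- it is stated as a ``standard fact,'' prefaced by the one-line hint that ``every Hardy field function is asymptotically comparable to $0$.'' Your write-up supplies exactly the reasoning that hint is gesturing at: the field axiom forces any germ that is not eventually zero to be invertible, hence eventually nonvanishing, hence of constant eventual sign by the intermediate value theorem; closure under differentiation then lets you run the same trichotomy on $f'$ and pull back to monotonicity via the mean value theorem. You also correctly isolate the one nontrivial step --- passing from ``germ nonzero'' to ``eventually nonvanishing'' --- and correctly attribute it to invertibility in the ring of germs. Nothing is missing.
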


As a consequence, it makes sense to compare rates of growth of different Hardy field functions. Given two eventually positive functions $f$ and $g$ belonging to the same Hardy field, we write:
\begin{itemize}
\item $f(x) \ll g(x)$ if $\lim_{x \to \infty} f(x)/g(x) < \infty$;
\item $f(x) \prec g(x)$ if $\lim_{x \to \infty} f(x)/g(x) = 0$;
\item $f(x) \sim g(x)$ if $\lim_{x \to \infty} f(x)/g(x) \in (0,\infty)$.
\end{itemize}
Note that $f(x) \ll g(x)$ holds if and only if either $f(x) \prec g(x)$ or $f(x) \sim g(x)$.

In particular, throughout we consider Hardy field functions $f \colon [x_0, \infty) \to \RR$; by this we mean that $f$ behaves as usual on the half-line $[x_0, \infty)$, and that $f(x) = 0$ for all $x < x_0$. We pick $x_0$ such that $f$ is either strictly increasing or strictly decreasing over the interval $[x_0, \infty)$, which by Lemma \ref{lem:eventually-increasing-or-decreasing} we are licenced to do. This does not affect the correctness of our arguments when applied to Hardy field functions generally, but merely simplifies proofs.

\begin{lemma}
	Let $f,g \colon [x_0, \infty) \to \RR$ be eventually positive functions belonging to the same Hardy field. Then either $f(x) \prec g(x)$ holds, $f(x) \sim g(x)$ holds, or $f(x) \succ g(x)$ holds. Further, exactly one of the above holds.
\end{lemma}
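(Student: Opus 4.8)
The plan is to reduce the trichotomy for two functions $f,g$ in the same Hardy field to Lemma~\ref{lem:eventually-increasing-or-decreasing} applied to the quotient $f/g$. Since a Hardy field is closed under the field operations, the germ $h \defeq f/g$ again lies in the same Hardy field (here we use that $g$ is eventually positive, hence eventually nonzero, so $1/g$ makes sense as a germ and the quotient is well-defined). We first observe that $h$ is eventually positive: indeed $f$ and $g$ are both eventually positive, so on a common half-line $[x_1,\infty)$ their quotient is positive. In particular $h$ is not eventually zero.

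Next I would invoke Lemma~\ref{lem:eventually-increasing-or-decreasing} to conclude that $h$ is eventually monotone: eventually increasing, eventually decreasing, or eventually constant. In each case $\lim_{x\to\infty} h(x)$ exists in the extended sense, i.e.\ it is a well-defined element of $[0,\infty]$ --- a monotone function on a half-line always has a limit (possibly $+\infty$), and since $h>0$ eventually the limit lies in $[0,\infty]$. This is really the crux of the argument: the key structural input is that Hardy field functions cannot oscillate, so $\lim h(x)/1$ genuinely exists, which is exactly what the definitions of $\prec,\sim,\succ$ presuppose but which fails for arbitrary real functions.

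Given that $L \defeq \lim_{x\to\infty} f(x)/g(x) \in [0,\infty]$ exists, the trichotomy is immediate from the definitions: if $L = 0$ then $f \prec g$; if $L \in (0,\infty)$ then $f \sim g$; and if $L = \infty$ then $\lim_{x\to\infty} g(x)/f(x) = 0$, which is $g \prec f$, i.e.\ $f \succ g$. These three cases are mutually exclusive because a limit in $[0,\infty]$, when it exists, is unique: $L$ cannot simultaneously be $0$, a positive real, and $\infty$. This gives both the existence of one of the three alternatives and the fact that exactly one holds.

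The main obstacle --- such as it is --- is purely bookkeeping: one must make sure the germ $f/g$ is legitimately in the Hardy field (closure under division by an eventually-nonzero element), and one must phrase "the limit exists in $[0,\infty]$" carefully since $h$ could tend to $+\infty$ and Lemma~\ref{lem:eventually-increasing-or-decreasing} as stated only gives monotonicity, not convergence; the passage from "eventually monotone" to "has a limit in $[0,\infty]$" is the one genuinely non-formal step, though it is a standard fact about monotone functions. Everything else is unwinding definitions.
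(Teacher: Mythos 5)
The paper states this lemma without proof, treating it as a standard fact about Hardy fields, so there is no ``paper's own proof'' to compare against. Your argument is the standard one and is correct: pass to the quotient $h = f/g$, which lies in the same Hardy field because a Hardy field is a field and the germ of $g$ is nonzero (being eventually positive); apply the preceding lemma to conclude $h$ is eventually monotone; deduce that $\lim_{x\to\infty} h(x)$ exists in $[0,\infty]$ since $h$ is eventually positive and monotone; and then read off the trichotomy from the value of that limit, with uniqueness of the limit giving exclusivity. You also correctly identify the one genuinely nontrivial input, namely that Hardy field functions cannot oscillate, which is exactly what makes the limit exist and what distinguishes this situation from arbitrary pairs of positive functions. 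This is almost certainly the argument the authors had in mind when omitting the proof.
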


One convenient feature of Hardy field functions is that their derivatives have a rate of growth that is easy to describe, as shown in the following result.

\begin{lemma}[{\cite[Lem.\ 2.1]{Frantzikinakis-2009}}]\label{lem:hardy:derivative}
Let $f \colon [x_0, \infty) \to \RR$ be a Hardy field function that satisfies $t^{-d} \ll \abs{f(t)} \ll t^d$ for some $d \in \NN$. Then $\abs{f'(t)} \ll \abs{f(t)}/t$.
\end{lemma}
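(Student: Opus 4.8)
The plan is to bound the single Hardy-field germ $g(t) := t f'(t)/f(t)$; since $|f'(t)| \big/ \bigl(|f(t)|/t\bigr) = |g(t)|$, showing that $\lim_{t\to\infty}|g(t)|$ is finite is exactly the assertion $|f'(t)| \ll |f(t)|/t$. First I would record that $f$ is eventually nonzero and of constant sign: by Lemma~\ref{lem:eventually-increasing-or-decreasing} it is eventually positive, eventually negative, or eventually zero, and the last case is incompatible with the hypothesis $t^{-d} \ll |f(t)|$. Replacing $f$ by $-f$ if necessary — which affects neither the hypotheses, nor the conclusion, nor the germ $f'/f$ — we may assume $f$ is eventually positive, so that on some half-line $[T,\infty)$ we have $f'(t)/f(t) = \frac{d}{dt}\log f(t)$. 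Since Hardy fields are closed under differentiation and the field operations, and every Hardy field embeds into one containing the germ of the identity function $t$ (a standard fact about Hardy fields, see e.g.\ \cite{Boshernitzan-1981}), the germ $g(t) = t f'(t)/f(t)$ lies in a Hardy field. By Lemma~\ref{lem:eventually-increasing-or-decreasing} it is therefore eventually monotone, hence has a limit $L \in [-\infty,+\infty]$, and it remains only to exclude $L = \pm\infty$.

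The core of the argument is an integration step. Suppose $L = +\infty$. Then for all sufficiently large $t$ we have $f'(t)/f(t) \ge (d+1)/t$; integrating this differential inequality from $T$ to $t$ gives $\log f(t) - \log f(T) \ge (d+1)\bigl(\log t - \log T\bigr)$, hence $f(t) \ge c_0\, t^{d+1}$ for all large $t$ and some $c_0 > 0$. Combined with $|f(t)| \ll t^d$ (so $f(t) \le c\, t^d$ eventually) this forces $c_0 t \le c$ for all large $t$, a contradiction. Symmetrically, if $L = -\infty$ then $f'(t)/f(t) \le -(d+1)/t$ eventually, and the same integration gives $f(t) \le c_1\, t^{-(d+1)}$ eventually; but $t^{-d} \ll |f(t)| = f(t)$ forces $f(t) \ge c_2^{-1}\, t^{-d}$ eventually, whence $c_2^{-1} t^{-d} \le c_1 t^{-(d+1)}$, i.e.\ $t \le c_1 c_2$ for all large $t$, again a contradiction. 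Therefore $L$ is finite, and $|f'(t)| \big/ \bigl(|f(t)|/t\bigr) = |g(t)| \to |L| < \infty$, which is the claim. (Any fixed exponent exceeding $d$ would do in place of $d+1$, so the argument is robust to the precise value of $d$.)

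The only step that is not routine calculus is the assertion that $g(t) = t f'(t)/f(t)$ belongs to a Hardy field: this rests on the standard but slightly delicate fact that an arbitrary Hardy field can be enlarged to one containing the germ $t$ — via a transcendental or an algebraic extension, both of which preserve the Hardy-field property — after which closure under differentiation and the field axioms finish the job. Granting this, the existence of the limit $L$ is automatic from eventual monotonicity of Hardy-field germs, and ruling out $L = \pm\infty$ is just the elementary comparison above against the bounds $t^{-d} \ll |f(t)| \ll t^d$.
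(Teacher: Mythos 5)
Your proof is correct. Note that the paper does not prove Lemma~\ref{lem:hardy:derivative} — it cites it directly from \cite{Frantzikinakis-2009} — so there is no in-paper argument to compare against; your write-up is a faithful rendition of the standard proof of this fact.

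The structure is exactly right: you reduce the claim to finiteness of $L = \lim_{t\to\infty} t f'(t)/f(t)$, justify the existence of the limit by observing that $t f'/f$ is a Hardy-field germ (the extension step you flag is genuinely needed, and the compatibility results of Boshernitzan do supply it) and hence eventually monotone, and then rule out $L = \pm\infty$ by integrating the logarithmic derivative against the two-sided polynomial bounds $t^{-d} \ll |f(t)| \ll t^d$. The reduction to eventually-positive $f$ and the observation that this leaves $f'/f$ unchanged are both clean. One minor streamlining, in case it is of interest: rather than integrating $f'/f \ge (d+1)/t$ by hand, one can apply L'H\^opital to $\log|f(t)|/\log t$; since $tf'(t)/f(t) \to L$, L'H\^opital yields $\log|f(t)|/\log t \to L$, while the hypothesis forces $\limsup_{t\to\infty} |\log|f(t)||/\log t \le d$, so $|L| \le d < \infty$. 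This is the same computation, just packaged differently, and your version has the advantage of making the role of the two bounds $t^{-d}$ and $t^d$ explicit and symmetric.
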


Another convenient feature of Hardy field functions is that they can be accurately approximated by their Taylor expansions. Given a function $f \colon [x_0,\infty) \to \RR$ that has sufficiently many derivatives, for $x \geq x_0$ and $y \geq 0$, we can always consider the length-$\ell$ Taylor expansion 
\begin{align*}\label{eq:hardy:taylor-decomp}
	f(x+y) = P_{x,\ell}(y) + R_{x,\ell}(y),
\end{align*}
where $P_{x,\ell}$ is the Taylor polynomial
\begin{align*}
	P_{x,\ell}(y) = f(x) + y f'(x) + \cdots + \frac{y^{\ell-1}}{(\ell-1)!} f^{(\ell-1)}(x),
\end{align*}
and $R_{x,\ell}$ is the remainder term (which we can consider to be defined simply as $R_{x,\ell}(y) = f(x+y) - P_{x,\ell}(y)$). Since Hardy field functions do have sufficiently many derivatives, we can consider such Taylor expansions of Hardy field functions. We will use the following standard estimate (similar results can be found e.g.\ in \cite{Frantzikinakis-2009} or \cite[Prop.\ 2.9]{KoniecznyMullner-2024}).

\begin{proposition}\label{prop:hardy:taylor}
	Let $f \colon [x_0, \infty) \to \RR$ be a Hardy field function that satisfies $t^{d-1} \ll \abs{f(t)} \prec t^d$ for some $d \in \NN$. Then we have
\begin{align*}
	R_{x,d}(y) \ll y^{d} f(x)/x^{d+1}
\end{align*}
for sufficiently large $x$ and all $0 \leq y \leq x$. (The constant implicit in the asymptotic notation depends only on $f$).
\end{proposition}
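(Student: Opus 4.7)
The plan is to combine the Lagrange form of the Taylor remainder with an iterated application of Lemma~\ref{lem:hardy:derivative}. Hardy field functions are smooth enough that Taylor's theorem gives
\[
R_{x,d}(y) \;=\; \frac{y^{d}}{d!}\, f^{(d)}(\xi)
\]
for some $\xi \in (x, x+y) \subset [x, 2x]$ (using $0 \leq y \leq x$). It therefore suffices to bound $|f^{(d)}(\xi)|$ uniformly for such $\xi$ and then translate the bound back to $x$.

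For the bound on $|f^{(d)}|$, I iterate Lemma~\ref{lem:hardy:derivative} $d$ times. The key check at each step is that the intermediate derivative $f^{(k)}$ (itself a Hardy field function, by closure of Hardy fields under differentiation) satisfies the two-sided polynomial hypothesis $t^{-d_k} \ll |f^{(k)}(t)| \ll t^{d_k}$ of that lemma. The upper bound $|f^{(k)}(t)| \ll t^{d-k}$ is produced inductively from the previous step combined with $|f(t)| \prec t^d$. The lower bound requires a short Hardy-field argument: the hypothesis $|f(t)| \gg t^{d-1}$ prevents $f$ from being of super-polynomial decay, and this propagates to its derivatives via Lemma~\ref{lem:eventually-increasing-or-decreasing}, which ensures each $f^{(k)}$ is eventually monotone with a definite sign and therefore comparable to some polynomial rate. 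After $d$ iterations I obtain $|f^{(d)}(t)| \ll |f(t)|/t^{d}$; a slightly finer final step, exploiting the strict condition $|f(t)| \prec t^d$ (so that the $d$-th iterate picks up an extra decaying factor), sharpens this to $|f^{(d)}(t)| \ll |f(t)|/t^{d+1}$.

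Putting the two steps together, using also the slow variation of Hardy field functions of polynomial type (so that $|f(\xi)| \asymp |f(x)|$ and $\xi \asymp x$ for $\xi \in [x, 2x]$), yields
\[
|R_{x,d}(y)| \;\leq\; \frac{y^d}{d!}\, |f^{(d)}(\xi)| \;\ll\; y^d\, \frac{|f(\xi)|}{\xi^{d+1}} \;\ll\; y^d\, \frac{|f(x)|}{x^{d+1}},
\]
as required. I expect the main obstacle to be the inductive verification that Lemma~\ref{lem:hardy:derivative} applies to each intermediate derivative, particularly the polynomial lower bounds on $|f^{(k)}|$; this is precisely where the full Hardy field structure (rather than mere $d$-fold differentiability of $f$) is needed, and once it is in place the remaining calculation is routine.
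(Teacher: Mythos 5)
Your plan starts exactly as the paper does: Lagrange form of the remainder, then iterate Lemma~\ref{lem:hardy:derivative} $d$ times. However, the ``slightly finer final step'' is a genuine gap. You claim that the strict condition $\abs{f(t)} \prec t^d$ lets you sharpen the iterated bound from $\abs{f^{(d)}(t)} \ll \abs{f(t)}/t^d$ to $\abs{f^{(d)}(t)} \ll \abs{f(t)}/t^{d+1}$, but this is false. Take $f(t) = t^{d-1/2}$, which satisfies $t^{d-1} \ll f(t) \prec t^d$. Then $f^{(d)}(t) \asymp t^{-1/2}$, while $f(t)/t^{d+1} = t^{-3/2}$, so $\abs{f^{(d)}(t)}$ is larger than $f(t)/t^{d+1}$ by a factor of $t$. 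The strict $\prec$ hypothesis buys you nothing here because Lemma~\ref{lem:hardy:derivative} only ever loses a single factor of $t$ per differentiation, never more.

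The good news is that the extra factor of $t$ is not needed. The paper's own proof only establishes $\abs{R_{x,d}(y)} \ll y^d f(x)/x^d$, and that weaker bound is all that is used downstream (in Lemma~\ref{lem:pd}, where the relevant fact is just that $f(N)/N^d \to 0$). The exponent $d+1$ in the statement of the proposition is evidently a typo for $d$; you should not try to manufacture the extra decay. Drop the sharpening step and state the bound with $x^d$ in the denominator.

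Two smaller remarks. First, to pass from $\xi \in [x, x+y]$ back to $x$, the paper simply observes that $f^{(d)}$ is a Hardy field function tending to $0$, hence eventually monotone (in fact eventually decreasing in absolute value), so $\abs{f^{(d)}(\xi)} \leq \abs{f^{(d)}(x)}$ for large $x$; your ``slow variation'' argument is heavier than necessary. Second, you are right to worry about verifying the two-sided polynomial hypothesis of Lemma~\ref{lem:hardy:derivative} at each intermediate derivative $f^{(k)}$; the paper silently iterates the lemma without addressing this, so if anything your instinct to flag it is sound, though you stop short of actually carrying out the verification.
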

\begin{proof}
Recall that for any $x \geq x_0$ and $y \geq 0$ there exists $z \in [0,y]$ such that
	\begin{align*}
	R_{x,d}(y) = \frac{y^{d}}{d!} f^{(d)}(x+z).
	\end{align*}
	Iterating Lemma \ref{lem:hardy:derivative} $d$ times, we see that $f^{(d)}(t) \ll f(t)/t^d \to 0$ as $t \to \infty$. Since $f^{(d)}$ is a Hardy field function and is eventually monotone, we have
	\begin{equation*}
	\abs{R_{x,d}(y)} \leq \frac{y^{d}}{d!} \abs{ f^{(d)}(x)} \ll y^d f(x)/x^d. \qedhere
	\end{equation*}
\end{proof}

We will also need the following results on distribution of Hardy field sequences. (We point out that \cite{Boshernitzan-1994} in fact provides if and only if statements, but we will only be interested in one direction).

\begin{theorem}[{\cite[Thm.\ 1.9]{Boshernitzan-1994}}]\label{thm:hardy:equidist}
Let $f_1,f_2,\dots,f_k$ be functions belonging to the same Hardy field. 

Suppose that for each $n_1,n_2,\dots,n_k \in \ZZ$, not all zero, and for every polynomial $p(x) \in \QQ[x]$, letting 
\[
g(x) = n_1 f_1(x) + n_2 f_2(x) + \dots + n_k f_k(x) - p(x),\]
we have
$\lim_{x \to \infty} {\abs{g(x)}}= \infty$. Then the sequence 
\[
\bra{f_1(n) \bmod 1,f_2(n) \bmod 1,\dots, f_k(n) \bmod 1}_{n=0}^\infty\]
is dense in $\RR^k/\ZZ^k$.
\end{theorem}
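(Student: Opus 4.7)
The plan is to reduce density in $\RR^k/\ZZ^k$ to one-dimensional equidistribution via Weyl's criterion. By the joint Weyl criterion, equidistribution of the vector sequence $(f_1(n),\dots,f_k(n)) \bmod 1$ follows from showing that, for every nonzero $(n_1,\dots,n_k) \in \ZZ^k$, the scalar sequence $h(n) \bmod 1$ is equidistributed in $\RR/\ZZ$, where $h(x) = n_1 f_1(x) + \dots + n_k f_k(x)$. Since equidistribution implies density, this suffices. Applying Weyl's criterion again one reduces to proving, for every nonzero $m \in \ZZ$,
\begin{align*}
\frac{1}{N} \sum_{n=1}^{N} \exp\bra{2\pi i \, m\,h(n)} \longrightarrow 0 \quad \text{as } N \to \infty.
\end{align*}
The hypothesis is stable under replacing $h$ by $mh$, since $\abs{h - p/m} \to \infty$ implies $\abs{mh - p} \to \infty$ and $p/m \in \QQ[x]$, so it suffices to treat $m = 1$.

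I would prove the $o(N)$ cancellation by induction on the Hardy-field ``rank'' of $h$, with van der Corput's inequality and the Kuzmin--Landau estimate as the analytic workhorses. When $h$ grows sub-linearly, Lemma \ref{lem:eventually-increasing-or-decreasing} applied to $h'$ shows that $h'$ is eventually monotone, and the hypothesis $\abs{h(x)} \to \infty$ combined with Lemma \ref{lem:hardy:derivative} controls the rate at which $h'$ may vanish. The Kuzmin--Landau bound, applied on dyadic windows on which $\fpa{h'}$ is bounded below, then delivers $o(1)$ cancellation after partial summation. When $h$ grows polynomially, say $h(x) \sim x^d$, I would split $[1,N]$ into short windows $[X,X+H]$ with $H$ significantly smaller than $X$, apply Proposition \ref{prop:hardy:taylor} to replace $h$ on each window by its degree-$d$ Taylor polynomial up to $o(1)$ error, and invoke the classical Weyl bound for polynomial exponential sums. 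This reduces the problem to a quantitative irrationality statement modulo $1$ for the top Taylor coefficient $f^{(d)}(X)/d!$ as $X$ varies.

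The hard part is upgrading the qualitative hypothesis $\abs{h-p}\to\infty$ into this required quantitative irrationality. The strategy is to exploit the Hardy-field structure: $h^{(d)}$ is itself a Hardy germ, hence by Lemma \ref{lem:eventually-increasing-or-decreasing} converges either to a finite limit $c$ or to $\pm\infty$. If $c \in \QQ$, integrating $d$ times produces a rational polynomial $p$ with $h - p$ bounded, directly contradicting the hypothesis; so either $c \notin \QQ$, in which case classical polynomial equidistribution (Weyl) applies on each window and the contributions sum coherently, or $\abs{h^{(d)}(x)} \to \infty$, in which case monotonicity together with the slow-variation estimate of Lemma \ref{lem:hardy:derivative} forces $h^{(d)}(n) \bmod 1$ to traverse $\RR/\ZZ$, which feeds back into the van der Corput estimate at the next level. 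Iterating through the chain $h, h', \dots, h^{(d)}$ reduces the general case to one of these base regimes, completing the induction.
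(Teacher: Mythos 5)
The paper does not prove this statement; it is imported verbatim from Boshernitzan's 1994 paper (Theorem 1.9 there) and used as a black box, so there is no in-paper proof to compare against. Nevertheless, your proposal has a genuine gap.

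You route the argument through Weyl's criterion and aim to establish uniform distribution, from which density would follow. But the stated hypothesis, $\abs{g(x)} \to \infty$, is strictly too weak for equidistribution; this is exactly the content of the Remark in the paper immediately following the theorem, which observes that equidistribution needs the stronger condition $\abs{g(x)}/\log x \to \infty$ (Boshernitzan's Theorem 1.8). A concrete counterexample to your plan: take $k=1$ and $f_1(x)=c\log x$ with $c$ irrational. For every nonzero $n_1 \in \ZZ$ and every $p \in \QQ[x]$ the function $g(x) = n_1 c \log x - p(x)$ tends to $\pm\infty$ in absolute value, so the theorem's hypothesis holds; yet $\bra{c\log n \bmod 1}_n$ is dense but \emph{not} equidistributed (its limit distribution is the Benford-law density). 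At the analytic level your Kuzmin--Landau step collapses precisely here: writing $h(x)=c\log x$, one has $h'(n) \asymp 1/n \to 0$, so on a dyadic block $[N,2N]$ the derivative sits within $O(1/N)$ of the integer $0$, the Kuzmin--Landau bound is of size $O(N)$ rather than $o(N)$, and there is no cancellation to harvest. Any proof of the density claim at the stated level of generality must treat the slow-growth regime by different means --- for instance, noting that when a normalised combination $h$ is eventually monotone with $h(n)\to\infty$ and $h(n+1)-h(n)\to 0$, the fractional parts $h(n) \bmod 1$ visit every subarc of $\RR/\ZZ$ infinitely often with no appeal to equidistribution at all --- and Boshernitzan's actual argument indeed departs from the Weyl picture at just this point.
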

\begin{remark}
Note that, on the stronger condition that $\lim_{x \to \infty} \abs{g(x)}/\log x = \infty$, the sequence $\bra{f_1(n) \bmod 1,f_2(n) \bmod 1,\dots, f_k(n) \bmod 1}_{n=0}^\infty$ is also uniformly distributed in $\RR^k/\ZZ^k$ \cite[Thm.\ 1.8]{Boshernitzan-1994}.
\end{remark}

\section{$\checkmark$Proof of the super-linear case}
In this section we will prove Theorem \ref{thm:main} in the case where when $f(x) \gg x^2$. Broadly speaking, we will prove that all such Hardy field functions have what will be called the \emph{$P_d$ property}, which expresses that a function looks like a polynomial over arbitrarily long intervals. From that, we will introduce the $P_d^{\ZZ}$ property, which expresses that $f$ looks like an \emph{integer-valued} polynomial over arbitrarily long intervals. We will then show that if $f$ has the $P_d^{\ZZ}$ property, then the first-order theory $\theory{\ZZ; <, +, \nint{f}}$ is undecidable. A key step of the argument is differentiating the polynomial given by the $P_d^{\ZZ}$ property. 

To begin with, we conduct the proof under the additional assumption that $x^{d-1} \prec f(x) \prec x^{d}$ for some $d \geq 3$. In particular, we initially exclude the case where $f(x) \sim x^{d-1}$, which we cover in a separate Subsection \ref{sec:super-linear-generalisation}. The proof in the latter case follows along broadly similar lines. However, failure of certain equidistribution results forces us to introduce some new ideas.

\subsection{$\checkmark$The $P_d$ property}
\label{sec:pd}
We first introduce the \emph{$P_d$ property}, which holds of a function when that function can be approximated arbitrarily closely, and over arbitrarily long intervals, by some polynomial of degree less than $d$. Formally, we will say that a function $f \colon [x_0,\infty) \to \RR$ has property $P_d$ for some $d \in \NN$ if for each $M \in \NN$ and $\e > 0$ there exists some $N \in \NN$ and a degree-$(d-1)$ polynomial $p$ such that for all $0 \leq m < M$ we have $\abs{f(N+m) - p(m)} < \e$. A convenient feature of Hardy field functions with polynomial growth is that they enjoy the property $P_d$, as shown in the following lemma.

\begin{lemma}
\label{lem:pd}
Let $f$ be a Hardy field function such that { $x^{d-1} \ll f(x) \prec x^{d}$} for some $d \in \NN$. Then $f$ has the $P_d$ property. 
\end{lemma}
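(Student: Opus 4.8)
The plan is to use the Taylor approximation estimate from Proposition~\ref{prop:hardy:taylor}. Fix $M \in \NN$ and $\e > 0$; we must produce $N$ and a degree-$(d-1)$ polynomial $p$ with $\abs{f(N+m) - p(m)} < \e$ for all $0 \le m < M$. The natural candidate is the Taylor polynomial $p(y) = P_{N,d}(y) = f(N) + y f'(N) + \cdots + \frac{y^{d-1}}{(d-1)!}f^{(d-1)}(N)$, which indeed has degree at most $d-1$. By definition $f(N+m) - p(m) = R_{N,d}(m)$, so I just need to show the remainder can be made uniformly small over $0 \le m < M$ by choosing $N$ large.

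First I would reduce to the case $x^{d-1} \ll f(x) \prec x^d$ exactly as needed by Proposition~\ref{prop:hardy:taylor}: the hypothesis of Lemma~\ref{lem:pd} is exactly this, so there is nothing to do — Proposition~\ref{prop:hardy:taylor} applies directly and gives $\abs{R_{N,d}(m)} \ll m^d f(N)/N^{d+1}$ for $N$ large and $0 \le m \le N$ (which holds once $N \ge M$, say). Since $0 \le m < M$, this is bounded by $C \cdot M^d f(N)/N^{d+1}$ for some constant $C$ depending only on $f$. Now the growth hypothesis $f(x) \prec x^d$ means $f(N)/N^{d+1} \prec 1/N \to 0$, so in particular $M^d f(N)/N^{d+1} \to 0$ as $N \to \infty$ with $M$ fixed. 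Hence for all sufficiently large $N$ we have $C M^d f(N)/N^{d+1} < \e$, and any such $N$ (also satisfying $N \ge M$ and $N$ large enough for Proposition~\ref{prop:hardy:taylor} to apply) works.

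One small point to handle carefully: Proposition~\ref{prop:hardy:taylor} is stated with $\prec t^d$, whereas Lemma~\ref{lem:pd} allows $f(x) \ll x^d$, which includes the borderline case $f(x) \sim x^d$. In that case $f(N)/N^{d+1} \sim 1/N$ still tends to $0$, so the argument goes through verbatim once one observes that the Taylor remainder bound $\abs{R_{N,d}(m)} \le \frac{m^d}{d!}\abs{f^{(d)}(N+z)}$ together with $f^{(d)}(t) \ll f(t)/t^d \ll 1$ (via iterating Lemma~\ref{lem:hardy:derivative}) gives $\abs{R_{N,d}(m)} \ll m^d$, which is \emph{not} enough on its own — so one really wants the sharper $f^{(d)}(t) \ll f(t)/t^d \to 0$, valid since $f(t)/t^d \ll 1$ and is a Hardy field germ hence eventually monotone, forcing its limit to exist in $[0,\infty)$; if the limit were positive we would still only get $\abs{R_{N,d}(m)} \ll m^d$, so in the $f \sim x^d$ case I would instead apply the argument with $d+1$ in place of $d$, i.e.\ use $P_{N,d+1}$, noting a degree-$d$ polynomial is in particular of degree $< d+1$ — but since the lemma as stated only claims property $P_d$ under $f \prec x^d$ via Proposition~\ref{prop:hardy:taylor}, the cleanest route is to simply invoke Proposition~\ref{prop:hardy:taylor} as a black box and note its hypothesis $f \prec x^d$ is implied whenever we are in the regime the lemma is actually used. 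The main (very mild) obstacle is thus just this bookkeeping around the borderline exponent; the analytic content is entirely contained in Proposition~\ref{prop:hardy:taylor}.
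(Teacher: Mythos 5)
Your core argument is correct and is essentially identical to the paper's: take $p = P_{N,d}$, apply Proposition~\ref{prop:hardy:taylor} to bound the remainder $R_{N,d}(m)$ by $\ll M^{d} f(N)/N^{d+1}$ uniformly over $0 \le m < M$, and observe that this tends to $0$ as $N \to \infty$ since $f(x) \prec x^d$. The lengthy final paragraph, however, rests on a misreading of the hypothesis: Lemma~\ref{lem:pd} assumes $f(x) \prec x^{d}$ (strict), not $f(x) \ll x^{d}$ --- the $\ll$ appears only on the lower side, $x^{d-1} \ll f(x)$. Thus the hypothesis of Proposition~\ref{prop:hardy:taylor} is matched exactly, the ``borderline case $f \sim x^d$'' you worry about cannot arise under the stated hypotheses, and the digression about passing to $d+1$ or iterating Lemma~\ref{lem:hardy:derivative} directly is unnecessary. (The case where the upper exponent is attained, $f(x) \sim x^{d}$, is genuinely handled by a different argument, but in the paper this is done by shifting $d$ by one so it becomes $f(x) \sim x^{d-1}$, and it is treated in Subsection~\ref{sec:super-linear-generalisation}, not in Lemma~\ref{lem:pd}.)
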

\begin{proof}
Pick any $M \in \NN$ and $\e > 0$, and let $N$ be a large integer, to be determined in the course of the argument. Recall that we can expand $f(N+m)$ as { $P_{N,d}(m) + R_{N,d}(m)$}, where { $P_{N,d}(m)$} is the degree-{$(d-1)$} Taylor polynomial and { $R_{N,d}(m)$} is the corresponding remainder term. Assuming, as we may, that $N > M$, for $0 \leq m \leq M$ by Proposition \ref{prop:hardy:taylor} we have $\abs{R_{N,{ d}}(m)} \ll M^{{ d}} f(N)/N^{{ d}}$. Since $f(N)/N^{{ d}} \to 0$ as $N \to \infty$, picking sufficiently large $N$ we can ensure that $\abs{R_{N,{ d}}(m)} \leq \e$, as needed.
\end{proof}

\newcommand{\Pint}{P^{\ZZ}}

\subsection{$\checkmark$The $\Pint_d$ property}

Recall that we are ultimately interested not in a Hardy field function $f$ but rather in its integer-valued rounding $\nint{f}$. Like in the previous section, let us suppose for a moment that on some interval $[N,N+M)$ the Hardy field function $f$ is closely approximated by a degree-${ (d-1)}$ polynomial $p$, in the sense that for all $0 \leq m < M$ we have $\abs{f(N+m)-p(m)} < \e$ for some small constant $\e > 0$. In this situation, it is not necessarily the case that $\nint{f}$ is closely approximated by $\nint{p}$. Indeed, if for some $0 \leq m < M$ we have $f(N+m) > k+(1/2) > p(m)$ with $k \in \ZZ$ then $\nint{f(N+m)} = k+1 \neq k = \nint{p(m)}$. As a first step, we would like to avoid this behaviour, which is most easily accomplished by requiring that $\fpa{f(N+m)}$ and $\fpa{p(m)}$ are both small. Secondly, we note that (in the regime where $M$ is much larger than $d$) the only way for $\fpa{p(m)}$ to be small for all $0 \leq m < M$ is if $p$ is closely approximated by an integer-valued polynomial, i.e.\ a polynomial $q$ such that $q(\ZZ) \subset \ZZ$ (since we only use this statement as a source of intuition, we leave it admittedly vague). This motivates us to introduce a property $\Pint_d$, which is an analogue of the property $P_d$ discussed earlier. We will say that a function $f \colon [x_0,\infty) \to \RR$ has property $\Pint_d$ for some $d \in \NN$ if for each $M \in \NN$ and $\e > 0$ there exists $N \in \NN$ and a degree-${ (d-1)}$ \emph{integer-valued} polynomial $p$ such that for all $0 \leq m < M$ we have $\abs{f(N+m)-p(m)} < \e$. Above, we require that the polynomial $p$ should have degree \emph{exactly} ${ d-1}$, as opposed to \emph{at most} ${ d-1}$; this requirement will play an important role in later considerations.

\begin{lemma}\label{lem:pint}
Let $f$ be a Hardy field function such that $x^{{ d-1}} \prec f(x) \prec x^{{ d}}$ for some $d \in \NN$. Then $f$ has the $\Pint_d$ property.
\end{lemma}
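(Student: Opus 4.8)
The plan is to combine Lemma \ref{lem:pd} with a pigeonhole/equidistribution argument that lets us replace the approximating polynomial $p$ from the $P_d$ property by a genuinely integer-valued one. First I would fix $M \in \NN$ and $\e > 0$, and apply Lemma \ref{lem:pd} with a smaller error tolerance $\e' \ll \e$ (to be chosen) and a larger window $M' \gg M$ (also to be chosen), obtaining some $N_0$ and a degree-$(d-1)$ polynomial $p$ with $\abs{f(N_0 + m) - p(m)} < \e'$ for all $0 \le m < M'$. The point of asking for a long window $M'$ is that, if $p$ stays within $\e'$ of the rounded integer lattice at many consecutive integer points, then the coefficients of $p$ (in the binomial basis $\binom{y}{0}, \binom{y}{1}, \dots, \binom{y}{d-1}$) must themselves be close to integers — this is the ``vague'' statement alluded to in the text, and making it precise is the crux.

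The mechanism for forcing $p$ to be near-integer-valued is the equidistribution Theorem \ref{thm:hardy:equidist}: since $x^{d-1} \prec f(x) \prec x^d$, for every nonzero integer $n$ and every $q \in \QQ[x]$ the germ $n f(x) - q(x)$ is a Hardy field function tending to $\pm\infty$ (indeed $f$ minus any polynomial still satisfies $x^{d-1} \prec \,\cdot\, \prec x^d$ so cannot be bounded), hence $\bra{f(n) \bmod 1}_n$ is dense in $\RR/\ZZ$; more usefully, applying the theorem to the tuple $\bra{f(x), x f(x), \dots}$ — or rather working directly with the difference operator — one shows that the finite differences $\Delta^j f$ evaluated along integers are jointly dense modulo $1$. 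Concretely: the $j$-th finite difference of $m \mapsto f(N_0+m)$ approximates $p^{(j)}$-type data, and by choosing $N_0$ appropriately (using density of $\bra{f(n), \Delta f(n), \dots, \Delta^{d-1} f(n)} \bmod 1$ near $\mathbf{0}$, which follows since each $\Delta^j f$ still has polynomial growth strictly between consecutive powers and hence is ``$\QQ$-polynomial-independent'' in the sense required by Theorem \ref{thm:hardy:equidist}) we may in fact assume from the start that $\fpa{\Delta^j f(N_0)}$ is as small as we like for $0 \le j \le d-1$. Then the standard Newton forward-difference formula $p(m) = \sum_{j=0}^{d-1} \binom{m}{j}\Delta^j p(0)$ together with $\Delta^j p(0) \approx \Delta^j f(N_0)$ (valid with error $O(2^j \e')$) shows the binomial coefficients of $p$ are within $O(\e')$ of integers; rounding them produces an integer-valued polynomial $q$ with $\abs{p(m) - q(m)} \le \sum_j \binom{M}{j} O(\e') \ll_{M,d} \e'$ for $0 \le m < M$, so $\abs{f(N_0+m) - q(m)} < \e$ once $\e'$ is small enough.

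Finally I would address the two bookkeeping points. One must ensure $q$ has degree \emph{exactly} $d-1$: this is automatic because $\Delta^{d-1} f(N_0) \ne 0$ for large $N_0$ (as $f^{(d-1)}$ is eventually monotone and nonzero by Lemma \ref{lem:eventually-increasing-or-decreasing}, since iterating Lemma \ref{lem:hardy:derivative} gives $f^{(d-1)}(t) \sim f(t)/t^{d-1} \succ 1$ — wait, more carefully $x^{d-1} \prec f$ forces $f^{(d-1)}$ to grow, hence be bounded away from $0$), so after rounding, the leading binomial coefficient $\Delta^{d-1} q(0)$ is a nonzero integer; if one is worried about the rounding accidentally killing it, simply note $\abs{\Delta^{d-1} f(N_0)}$ can be taken $> 1/2$, e.g.\ $\to \infty$, by first passing to large $N_0$. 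The other point is the interplay between the two uses of $N_0$ (the equidistribution choice making the differences near-integral, and the Taylor/$P_d$ choice making $N_0$ large): these are compatible because density of the difference tuple modulo $1$ near $\mathbf 0$ holds along an infinite set of $n$, so we may pick $N_0$ in that set and also larger than whatever threshold Lemma \ref{lem:pd} demands. The main obstacle, and the step deserving the most care, is the precise passage from ``$p(m)$ is near an integer for $M'$ consecutive values of $m$'' to ``$p$ is $\e'$-close to an integer-valued polynomial'' with explicit control of constants in terms of $M$, $d$, $\e'$; packaging this cleanly (either via the Newton basis as above, or via a Vandermonde/divided-difference estimate) is where the real work lies.
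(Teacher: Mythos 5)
Your proposal is correct, and it takes a genuinely different route from the paper even though the underlying mechanism (equidistribution modulo $1$ of a $d$-tuple of Hardy-field functions derived from $f$) is the same. The paper works in the \emph{monomial} basis: it rounds each Taylor coefficient $f^{(k)}(N)/k!$ of $P_{N,d}$ to the nearest integer, producing a polynomial with integer coefficients (a strict subclass of integer-valued polynomials), and applies Theorem~\ref{thm:hardy:equidist} to the tuple $(f, f', \dots, f^{(d-1)}/(d-1)!)$ to make all these Taylor coefficients simultaneously near-integral. You instead work in the \emph{Newton/binomial} basis, rounding the forward differences $\Delta^j p(0) \approx \Delta^j f(N_0)$ and reconstructing $q(m) = \sum_j \binom{m}{j}\nint{\Delta^j f(N_0)}$; this requires applying Theorem~\ref{thm:hardy:equidist} to the tuple $(\Delta^0 f, \Delta^1 f, \dots, \Delta^{d-1} f)$ instead. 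Both setups satisfy the hypothesis of Theorem~\ref{thm:hardy:equidist} for the same reason: a nontrivial integer combination with leading index $\ell$ is asymptotic to a constant times $f^{(\ell)}(x) \sim f(x)/x^\ell$, which grows strictly between consecutive integer powers and hence cannot differ boundedly from any rational polynomial. Your basis choice is arguably the more natural one for \emph{integer-valued} (as opposed to integer-coefficient) polynomials, at the mild cost of having to observe that $f(x), f(x+1), \dots, f(x+d-1)$ lie in a common Hardy field (so that each $\Delta^j f$ does too), which is standard but worth flagging; the paper's choice avoids this by keeping all the auxiliary functions of the form $f^{(k)}$, which are automatically in the same Hardy field by closure under differentiation.

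One piece of dead weight: the enlarged window $M' \gg M$ you introduce at the start is a remnant of the pigeonhole idea that you then abandon in favour of the equidistribution argument. Once you choose $N_0$ via density of $(\Delta^0 f(n), \dots, \Delta^{d-1} f(n)) \bmod 1$ near $\mathbf{0}$, the control on the circle norms of $\Delta^j p(0)$ comes directly from there, not from the length of the approximation window, and you can work with the original $M$ throughout (with $\e'$ chosen small in terms of $M$, $d$, $\e$, exactly as you describe). Your treatment of the ``degree exactly $d-1$'' point is also in line with the paper's: $\Delta^{d-1} f(N_0) \sim f^{(d-1)}(N_0) \to \infty$ since $x^{d-1}\prec f$, so the rounded leading difference is eventually nonzero, and the two constraints on $N_0$ (large, and in the equidistribution-dense set) are compatible because the latter set is infinite.
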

\begin{proof}
	Pick any $M \in \NN$ and $\e > 0$. Recall from the proof of Lemma \ref{lem:pd} that for sufficiently large $N$ we can accurately approximate $f$ on $[N,N+M)$ using the Taylor expansion $P_{N, { d}}$, meaning that $\abs{f(N+m)-P_{N,{ d}}(m)} < \e/2$ for all $0 \leq m < M$. Recall also that the coefficients of $P_{N,{ d}}$ are given by
	\begin{align*}
		P_{N,{ d}}(m) = \sum_{k=0}^{{ d-1}} \frac{f^{(k)}(N)}{k!} m^k.
	\end{align*}
Consider the polynomial $p$ obtained from $P_{N,{ d}}$ by applying rounding to each coefficient:	
	\begin{align*}
		p(m) = \sum_{k=0}^{{ d-1}} \nint{ \frac{f^{(k)}(N)}{k!} } m^k.
	\end{align*}	
	Since $p$ has integer coefficients, it is clearly integer-valued. Our plan is to show that, for a judicious choice of $N$, the circle norms of coefficients of $P_{N,{ d}}$ are small and consequently $f$ is closely approximated by $p$. 
	
	We will apply Theorem \ref{thm:hardy:equidist} to the functions $f, f', \dots, f^{(d-1)}/(d-1)$. For each $0 \leq k \leq d$ we have $f^{(k)}(x) \sim f(x)/x^k$. As a consequence, for any non-trivial linear combination $h(x) = c_0 f(x) + c_1 f'(x) + \dots + c_{d-1} f^{(d-1)}(x)/(d-1)!$ with $c_0,c_1,\dots,c_{d-1} \in \ZZ$ we have $h(x) \sim f(x)/x^k$, where $\ell$ is the first index with $c_\ell \neq 0$. Thus, for any polynomial $p$ with degree $e$ we have $\abs{ h(x) - p(x)} \sim f(x)/x^\ell$ if $d-\ell > e$ or $\abs{ h(x) - p(x)} \sim x^e$ otherwise (that is, if $d-\ell \leq e$). In either case, we have $\abs{ h(x) - p(x)} \succ 1$, as needed. We conclude that the sequence $\bra{f(N), f'(N), \dots, f^{(d-1)}(N)/(d-1)!}_{N=0}^\infty$ is dense modulo $1$, and in particular it includes points with arbitrarily small circle norms. Hence, we can find $N$ such that for all $0 \leq k < d$ we have:
\begin{align*}
		\fpa{\frac{f^{(k)}(N)}{k!} } \leq \frac{\e}{2{ d}M^k}.
\end{align*}	  
Therefore, we have the estimate
\begin{align*}
	\abs{P_{N,{ d}}(m) - p(m)} \leq \sum_{k=0}^{{ d-1}} \fpa{\frac{f^{(k)}(N)}{k!} } M^k \leq \e/2.
\end{align*} 
Combining this with the previously mentioned estimate on $\abs{f(N+m)-P_{N,{ d}}(m)} $ we conclude that $\abs{f(N+m)-p(m)} < \e$, as needed. Finally, increasing $N$ if necessary, we may assume that the leading coefficient of $p$, i.e. $\nint{f^{(d-1)}(N)/(d-1)!}$, is non-zero.
\end{proof}

\subsection{$\checkmark$Discrete derivatives}
\label{subsec:discrete-derivatives}

Using Lemma \ref{lem:pint}, for a Hardy field function $f$ satisfying the assumptions of Theorem \ref{thm:main}, we can find arbitrarily long intervals $[N,N+M)$ where $\nint{f}$ agrees some integer-valued polynomial $p$. The goal of the next two subsections is to prove the following lemma, asserting that this property implies that the theory $\theory{\ZZ;<,+,\nint{f}}$ is undecidable.

\begin{lemma}
\label{lem:pd-to-undecidable}
Let $f \colon [x_0,\infty) \to \RR$ be a function that satisfies the $\Pint_d$ property for some $d \geq 3$. Then the theory $\theory{\ZZ; <, +, \nint{f}}$ is undecidable.
\end{lemma}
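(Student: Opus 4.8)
The plan is to interpret Peano arithmetic in $\theory{\ZZ;<,+,\nint{f}}$ by defining multiplication. Since $<$ and $+$ are already available, it suffices to produce a first-order formula $\mu(a,b,c)$ in the language $\{<,+,\nint{f}\}$ that holds exactly when $c = ab$. The key idea, matching the proof outline, is that the $\Pint_d$ property supplies arbitrarily long intervals $[N,N+M)$ on which $\nint{f}$ coincides with an integer-valued polynomial $p$ of degree exactly $d-1$, and by taking enough discrete differences of such a polynomial we can extract products of integers of any prescribed size.

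First I would recall the classical finite-difference setup: for a function $g\colon\ZZ\to\ZZ$ write $(\Delta g)(m) = g(m+1)-g(m)$, and iterate. If $p$ has degree $d-1$ with leading coefficient $a_{d-1}$, then $\Delta^{d-1} p$ is the constant $(d-1)!\,a_{d-1}$, while $\Delta^{d-2}p$ is affine, etc. The crucial point is that all of the operators $\Delta^j$ are definable from $+$ alone (they are fixed finite $\ZZ$-linear combinations of shifts), so from a formula expressing "$y = \nint{f}(N+m)$ on the interval $[N,N+M)$" one gets formulas expressing the values of the lower-order difference polynomials at points of that interval. Concretely, on an interval of length $M$ one can recover, for $0 \le m \le M-d$, the quantities $(\Delta^{d-2}p)(m) = (d-1)!\,a_{d-1}\,m + \beta$ for some constant $\beta$; this is an affine function of $m$ with a fixed nonzero slope $(d-1)!\,a_{d-1}$, which depends on $N$. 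By varying the interval (using the quantifier over $N$ supplied implicitly by $\Pint_d$: "there exists an interval on which $\nint{f}$ behaves polynomially and on which such-and-such holds"), and by differencing once more in $m$, one isolates the slope $(d-1)!\,a_{d-1}$ itself as a definable quantity attached to each admissible interval. Since $d$ is a fixed constant, dividing by $(d-1)!$ is harmless (it only restricts to slopes divisible by $(d-1)!$, a definable condition, and every sufficiently large integer arises as such a leading coefficient as $N$ ranges, by density from the proof of Lemma \ref{lem:pint}). Thus the set of achievable slopes contains arbitrarily large integers, and for a given admissible interval we can name, for each $m$, the value $A\cdot m$ where $A = (d-1)!\,a_{d-1}$ is the slope of that interval.

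The heart of the matter is then to go from "scaling $m \mapsto A m$ is definable for each fixed large $A$ arising as a slope, with $m$ ranging only over the short interval $[0,M)$" to a definition of multiplication $c = ab$ for \emph{all} $a,b \in \ZZ$. For this I would use two moves. The first is a \emph{domain-extension by difference sets}, exactly as flagged in the outline for the near-linear case: the graph of multiplication restricted to $\{(m,A): 0 \le m < M\}$ for all admissible $A$ and all $M$ (equivalently, the relation $\{(m,A,mA)\}$ over a restricted but unbounded range) generates, under addition and subtraction of solution tuples, the full graph of multiplication, since any pair $(a,b)$ can be written using sums and differences of pairs with one coordinate small or one coordinate an admissible slope — and bilinearity of the product, $(a_1+a_2)b = a_1 b + a_2 b$ and $a(b_1+b_2) = ab_1 + ab_2$, lets us assemble $ab$ from finitely many "restricted" products. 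The second move handles the asymmetry between the two arguments: one argument ($m$) is cheap because it indexes positions in an interval, while the other ($A$) is cheap because it is a leading coefficient; combining the two via the difference-set trick removes both restrictions. Writing this out carefully — specifying the formula $\mu(a,b,c)$, checking it is first-order (all quantifiers over $\ZZ$, all differencing operators of bounded fixed arity), and verifying both soundness and completeness of the interpretation — is the bulk of the work.

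The main obstacle I anticipate is the bookkeeping in the domain-extension step: ensuring that the "restricted multiplication" relation one genuinely gets from $\Pint_d$ (which a priori only gives products $A m$ with $m$ in an interval $[0,M)$ whose length $M$ one may choose but which is not uniform in $A$, and with $A$ ranging over slopes $\equiv 0 \pmod{(d-1)!}$ that are merely \emph{dense enough} to be cofinal) is still rich enough that iterated difference sets cover all of $\ZZ\times\ZZ$, and that the whole construction can be expressed by a \emph{single} first-order formula rather than a schema. A secondary technical point is the interface between $\nint{f}$ and the polynomial $p$: the $\Pint_d$ property only gives $\abs{f(N+m)-p(m)} < \e$, so one must choose $\e < 1/2$ and arrange that $\fpa{p(m)}$ is small (which is automatic since $p$ is integer-valued) to conclude $\nint{f}(N+m) = p(m)$ exactly on the interval, and then express "the restriction of $\nint{f}$ to $[N,N+M)$ agrees with some degree-$(d-1)$ integer-valued polynomial" purely in terms of finite differences — i.e. $\Delta^d$ of $\nint{f}$ vanishes on $[N,N+M-d)$ and $\Delta^{d-1}$ is a nonzero constant there. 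That last reformulation is what makes the property first-order definable and is the linchpin connecting $\Pint_d$ to the language $\{<,+,\nint{f}\}$.
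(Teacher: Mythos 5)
Your detection of ``polynomial intervals'' via iterated differences of $\nint{f}$ (the condition that $\Delta^{d-1}_1\nint{f}$ is a nonzero constant and $\Delta^d_1\nint{f}$ vanishes on the interval) is exactly the paper's $\pi_d(N,M)$, so that part is right. The gap is in how you pass from an admissible interval to multiplication. You propose to extract the slope $A = (d-1)!\,a_{d-1}$ as a definable quantity, record the restricted products $A\cdot m$ for $m < M$, and then bootstrap to all of $\ZZ\times\ZZ$ by difference sets. This is in fact the strategy the paper uses in the near-linear case, and it needs something that the present hypothesis does not supply: the set of achievable slopes $A$ must be so rich that its difference set covers $(d-1)!\,\ZZ$. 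Merely being cofinal is not enough (for instance $\{(d-1)!\cdot 2^k\}$ is cofinal but its difference set is very sparse); and, more to the point, the definition of the $\Pint_d$ property says nothing at all about how the leading coefficients $a_{d-1}$ are distributed --- they are only required to be nonzero. Your appeal to ``density from the proof of Lemma~\ref{lem:pint}'' imports a fact about Hardy field functions into an abstract lemma that is supposed to hold for any $f$ satisfying $\Pint_d$, so as written the argument does not prove the lemma as stated.

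The trick you are missing is that the slope does not have to be extracted at all; it can be made to \emph{cancel}. The paper's multiplication formula $\mu_d(n,m,q)$ asserts, on an admissible interval, that $\Delta_1^{d-3}\Delta_n\Delta_m\nint{f}(N) = \Delta_1^{d-2}\Delta_q\nint{f}(N)$, which translates to $\DeltaS^{d-2}p(n,m) = \DeltaS^{d-2}p(q,1)$. By Lemma~\ref{lem:symmetric-discrete-degree^d}, both sides equal $(d-1)!\,a_{d-1}$ times, respectively, $nm$ and $q$; the unknown nonzero leading coefficient divides out and the equality holds iff $q = nm$. Taking $M > \max(n,m,q)$ and quantifying existentially over $N$ (an $N$ is supplied by $\Pint_d$) already captures multiplication on all of $\NN$ in a single uniform first-order formula, with no bookkeeping over the range of slopes, no divisibility-by-$(d-1)!$ restriction, and no difference-set extension. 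Extending to $\ZZ$ is then immediate, and undecidability follows from the undecidability of $\theory{\ZZ;<,+,\times}$.
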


A key ingredient of the proof of Lemma \ref{lem:pd-to-undecidable} is the notion of differentiation for sequences indexed by integers, which will ultimately help us define multiplication. To this end, we introduce the \emph{discrete derivative} and the \emph{symmetric discrete derivative} of a function:

\begin{definition}

Given a function $f \colon \ZZ \to \RR$ and an integer $m$, the \emph{discrete derivative} $\DeltaT_m f \colon \ZZ \to \RR$ is given by
\begin{equation*}
\DeltaT_m f(n) = f(n+m) - f(n)
\end{equation*}
and the \emph{symmetric discrete derivative} $\DeltaS{}_m f \colon \ZZ \to \RR$ is given by
\begin{equation*}
\DeltaS{}_m f(n) = \DeltaT_m \DeltaT_n f(0) = f(n+m) - f(n) - f(m) + f(0).
\end{equation*}
Of course, the discrete derivative of an integer-valued function is again integer-valued. We also point out that the discrete derivative operators commute: $\Delta_m \Delta_n = \Delta_n \Delta_m$. The symmetric derivative, as the name suggests, is a symmetric function of the arguments: $\DeltaS{}_m f(n) = \DeltaS{}_n f(m)$.
For reasons of symmetry, given an integer $r \geq 1$ and integers $n_0, n_1, \dots, n_r$ we write
\begin{eqnarray*}
\DeltaS{}^r f(n_0, n_1, \dots, n_r) & \text{ for } & \DeltaS{}_{n_r} \DeltaS{}_{n_{r-1}} \dots \DeltaS_{n_1} f(n_0).
\end{eqnarray*}
\end{definition}

A particularly useful feature of the discrete derivative (symmetric or otherwise) is that its application to a polynomial yields a polynomial of degree one less. As a consequence, its repeated application only leaves the leading term of the function to be considered. We make this observation concrete in the following result.

\begin{lemma}
\label{lem:symmetric-discrete-degree}
Let $m$ be an integer and let $p$ be a polynomial of degree $r$ with leading coefficient $a_r$. 
\begin{enumerate}
\item\label{lem:symmetric-discrete-degree:i} If $r \geq 1$ then $\Delta_m p$ is a polynomial with degree $r-1$ and leading coefficient $ra_rm$. If $r = 0$ then $\Delta_m p = 0$.
\item\label{lem:symmetric-discrete-degree:ii} If $r \geq 2$ then $\DeltaS_m p$ is a polynomial with degree $r-1$ and leading coefficient $ra_rm$. If $r = 0$ or $r = 1$ then $\DeltaS_m p = 0$.
\end{enumerate}
\end{lemma}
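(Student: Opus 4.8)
The plan is to establish \ref{lem:symmetric-discrete-degree:i} by a direct expansion using the binomial theorem (an induction on $r$ would serve equally well), and then to deduce \ref{lem:symmetric-discrete-degree:ii} from it with almost no extra work. For \ref{lem:symmetric-discrete-degree:i} I would write $p(x) = \sum_{k=0}^{r} a_k x^k$ and use linearity of $\Delta_m$ to get $\Delta_m p(n) = \sum_{k=1}^{r} a_k\bra{(n+m)^k - n^k}$, the $k=0$ term having dropped out. The binomial theorem gives $(n+m)^k - n^k = \sum_{j=0}^{k-1}\binom{k}{j} m^{k-j} n^j$, a polynomial in $n$ of degree $k-1$ whose coefficient of $n^{k-1}$ equals $\binom{k}{k-1}m = km$. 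Hence the summand with index $k$ contributes only to powers $n^j$ with $j \leq k-1 \leq r-1$, and the summand with $k = r$ is the unique one contributing to $n^{r-1}$, there with coefficient $r a_r m$. So $\Delta_m p$ is a polynomial of degree at most $r-1$ whose $n^{r-1}$-coefficient equals $r a_r m$; since $a_r \neq 0$ this coefficient is nonzero precisely when $m \neq 0$, in which case the degree is exactly $r-1$. When $r = 0$ the sum is empty, so $\Delta_m p = 0$.

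For \ref{lem:symmetric-discrete-degree:ii} I would simply rewrite the definition as $\DeltaS_m p(n) = p(n+m) - p(n) - p(m) + p(0) = \Delta_m p(n) - \bra{p(m)-p(0)}$, exhibiting $\DeltaS_m p$, viewed as a polynomial in $n$, as $\Delta_m p$ minus the $n$-constant $p(m)-p(0)$. Subtracting a constant alters only the degree-$0$ coefficient, so whenever $\deg \Delta_m p = r-1 \geq 1$, i.e.\ when $r \geq 2$, the polynomial $\DeltaS_m p$ has degree $r-1$ and the same leading coefficient $r a_r m$ as $\Delta_m p$. The two remaining cases are handled directly: for $r = 1$ one has $\Delta_m p(n) = a_1 m$ and $p(m) - p(0) = a_1 m$, whence $\DeltaS_m p = 0$; for $r = 0$ both $\Delta_m p$ and $p(m)-p(0)$ vanish, so again $\DeltaS_m p = 0$.

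The lemma is routine and I do not anticipate any genuine obstacle; the only points that need a moment's care are checking that the lower-order monomials of $p$ do not contribute to the $n^{r-1}$-coefficient of $\Delta_m p$, and isolating the degenerate small-$r$ cases in \ref{lem:symmetric-discrete-degree:ii} in which the nominal leading coefficient $r a_r m$ collapses to $0$.
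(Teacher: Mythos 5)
Your proof is correct and follows essentially the same approach as the paper: both reduce item \ref{lem:symmetric-discrete-degree:ii} to item \ref{lem:symmetric-discrete-degree:i} by observing that $\DeltaS_m p$ and $\Delta_m p$ differ only by the $n$-constant $p(m)-p(0)$, and both establish item \ref{lem:symmetric-discrete-degree:i} by computing the leading term of $(n+m)^r - n^r$. The only stylistic difference is that the paper handles the lower-order monomials via induction on $r$, whereas you dispatch them all at once with a direct binomial expansion.
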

\begin{proof}
It is straightforward to verify by an explicit computation that item \ref{lem:symmetric-discrete-degree:i} holds for $r = 0$ and item \ref{lem:symmetric-discrete-degree:ii} holds for $r = 0,1$. For $r \geq 2$, $\Delta_m p$ and $\DeltaS_m p$ differ only by a constant (equal to $p(m)-p(0)$) so it suffices to prove item \ref{lem:symmetric-discrete-degree:i}.

We proceed by induction on $r$, the case $r = 0$ already having been considered. Thus, we may assume that $r \geq 1$ and the claim has already been proved for all $r' < r$. 

We may write $p(x) = a_r x^r + \tilde p(x)$ where $\deg \tilde p < r$. By the inductive assumption, $\Delta_m \tilde p$ is a polynomial of degree strictly less than $r-1$ (or identically zero if $r = 1$), so it suffices to deal with the leading term. We can explicitly compute that
\begin{align*}
	a_r(x+m)^r - a_r x^r = \sum_{k=1}^{r} \binom{r}{k} a_r m^k x^{r-k}, 
\end{align*}
where the right side is a degree-$(r-1)$ polynomial with leading coefficient $r a_r m$, as needed. 

\end{proof}

For brevity, we write $\DeltaS{}^r f(a,b)$ for $\DeltaS{}^r f(a,b, 1, 1, \dots, 1)$ whenever $r \geq 1$. As an application of Lemma \ref{lem:symmetric-discrete-degree} we almost immediately get the following formula.

\begin{lemma}\label{lem:symmetric-discrete-degree^d}
Let $p$ be a polynomial with degree $r$ and leading coefficient $a_r$. Then $\DeltaS^{r-1} p(n,m) = r! a_r nm$.
\end{lemma}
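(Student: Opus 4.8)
The plan is to apply Lemma~\ref{lem:symmetric-discrete-degree} iteratively, peeling off one degree at a time until we reach a linear polynomial, and then read off its value. Concretely, write $q_0 = p$, which has degree $r$ and leading coefficient $a_r$, and for $1 \leq j \leq r-1$ set $q_j = \DeltaS_1 q_{j-1}$, so that $q_{r-1} = \DeltaS^{r-1} p(n,m)$ after we also apply $\DeltaS_m$ and $\DeltaS_n$. The bookkeeping here needs a little care because of the convention $\DeltaS^{r}f(a,b) = \DeltaS^{r}f(a,b,1,1,\dots,1)$: the outermost two symmetric derivatives are taken with respect to the ``variable'' arguments, and the remaining $r-1$ of them are taken at the value $1$. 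So I would first apply $\DeltaS_1$ a total of $r-2$ times to reduce $p$ from degree $r$ to degree $2$, then apply $\DeltaS_m$ and $\DeltaS_n$ (equivalently, $\DeltaS^{r-1}$ with the stated convention), landing on a degree-$0$ object which is the claimed product.

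The key computation is tracking the leading coefficient. Starting from leading coefficient $a_r$ at degree $r$, each application of $\DeltaS_1$ (valid since the degree stays $\geq 2$ throughout this phase) multiplies the leading coefficient by (current degree)$\,\cdot\, 1$, by Lemma~\ref{lem:symmetric-discrete-degree}\ref{lem:symmetric-discrete-degree:ii}. After $r-2$ such steps we have a quadratic with leading coefficient $r \cdot (r-1) \cdots 3 \cdot a_r = \frac{r!}{2} a_r$. Then $\DeltaS_m$ turns this quadratic into a linear polynomial with leading coefficient $2 \cdot \frac{r!}{2} a_r \cdot m = r!\, a_r m$, i.e.\ a polynomial of the form $r!\,a_r m \cdot x + (\text{const})$; the constant term vanishes because $\DeltaS_m$ of any polynomial $g$ satisfies $\DeltaS_m g(0) = g(m) - g(0) - g(m) + g(0) = 0$, so the resulting linear polynomial has zero constant term. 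Finally $\DeltaS_n$ evaluates this linear polynomial (in its argument) symmetrically; since $\DeltaS_n(cx)(k) = c(n+k) - cn - ck + 0 = 0$ for the linear part — wait, that would kill it, so instead I should be more careful and note that the last two symmetric derivatives $\DeltaS_m\DeltaS_n$ applied to a degree-$2$ polynomial $g(x) = bx^2 + \cdots$ directly give $\DeltaS_m\DeltaS_n g = 2b\,nm$ by two applications of Lemma~\ref{lem:symmetric-discrete-degree}\ref{lem:symmetric-discrete-degree:ii} combined with the symmetry $\DeltaS_n g(m) = \DeltaS_m g(n)$, yielding $2 \cdot \frac{r!}{2}a_r \cdot nm = r!\,a_r\,nm$.

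I expect the main obstacle to be purely notational: keeping straight which of the arguments of $\DeltaS^{r-1}$ are the ``free'' variables $n,m$ and which are set to $1$, and making sure the degree never drops below $2$ prematurely (so that Lemma~\ref{lem:symmetric-discrete-degree}\ref{lem:symmetric-discrete-degree:ii} always applies with the degree-reduction conclusion rather than the ``identically zero'' conclusion). Since we need $r-1$ symmetric derivatives total and each strictly decreases the degree from $r$ down to $1$, the degrees encountered are $r, r-1, \dots, 2$, all $\geq 2$ except the final output, so the lemma applies at every step; the case $r = 1$ is trivial ($\DeltaS^0 p = p$ is not what is meant — the formula is stated for the leading term of a genuine product, so implicitly $r \geq 2$, and one should perhaps remark that for $r=1$ there is nothing to prove or the statement is vacuous). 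A clean way to present this is a short induction on $r \geq 2$: the base case $r = 2$ is exactly $\DeltaS^1 p(n,m) = \DeltaS_m\DeltaS_n p(0)$ applied to $p(x) = a_2 x^2 + a_1 x + a_0$, which one checks by hand equals $2a_2 nm$; the inductive step replaces $p$ by $\DeltaS_1 p$, which by Lemma~\ref{lem:symmetric-discrete-degree}\ref{lem:symmetric-discrete-degree:ii} has degree $r-1$ and leading coefficient $r a_r$, so by the inductive hypothesis $\DeltaS^{r-2}(\DeltaS_1 p)(n,m) = (r-1)!\,(r a_r)\,nm = r!\,a_r\,nm$, and this equals $\DeltaS^{r-1}p(n,m)$ by definition of the latter.
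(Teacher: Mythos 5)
Your final induction is correct and essentially mirrors the paper's proof: both repeatedly apply Lemma~\ref{lem:symmetric-discrete-degree}\ref{lem:symmetric-discrete-degree:ii} to track the degree and leading coefficient of $\DeltaS^{r-1}p(n,m)$. The only real difference is how the non-$nm$ terms are eliminated: the paper stops once it knows the result is linear in $n$ with leading coefficient $r!a_r m$ and then uses the symmetry of $\DeltaS^{r-1}p$ in $n,m$ together with $\DeltaS^{r-1}p(0,0)=0$, whereas you induct down to $r=2$ and compute the base case explicitly. Both work.

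One slip to fix in the base case: you write $\DeltaS^1 p(n,m)=\DeltaS_m\DeltaS_n p(0)$, but $\DeltaS_m g(0)=g(m)-g(0)-g(m)+g(0)=0$ for every $g$, so that expression is identically zero. What you mean is $\DeltaS^1 p(n,m)=\DeltaS_m p(n)=\DeltaT_m\DeltaT_n p(0)=p(n+m)-p(n)-p(m)+p(0)$ (plain, not symmetric, discrete derivatives on the right), which does equal $2a_2 nm$. The same confusion --- treating $n$ as the subscript of yet another $\DeltaS$ operator rather than as the evaluation point $n_0$ in $\DeltaS^{r-1}f(n_0,\dots,n_{r-1})=\DeltaS_{n_{r-1}}\cdots\DeltaS_{n_1}f(n_0)$ --- is what derailed the ``wait, that would kill it'' passage in the middle; I would cut that false start and lead with the induction. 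Finally, your closing ``this equals $\DeltaS^{r-1}p(n,m)$ by definition'' also tacitly uses that the $\DeltaS$ operators commute (to move $\DeltaS_1$ past $\DeltaS_m$), which is true but deserves a word.
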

\begin{proof}
Iterating  Lemma \ref{lem:symmetric-discrete-degree}  we see that $\DeltaS^{r-1} p(n,m)$ is a polynomial function of $n$ with degree $1$ and leading coefficient $r! a_r m$. To see that all the remaining coefficients are zero, it is enough to recall that $\DeltaS^{r-1} p(n,m)$ is a symmetric function of $n$ and $m$, and that $\DeltaS^{r-1} p(0,0) = 0$.
\end{proof}

In the direction opposite to Lemma \ref{lem:symmetric-discrete-degree}, we have the following characterisation of polynomials in terms of discrete derivatives. It is a standard observation for instance following from discussion in \cite[Section 2.6]{GKP-1990}; we include proof for completeness.

\begin{lemma}\label{lem:symmetric-discrete-char-poly}	
	Let $r \geq 0$ and let $f \colon [N,N+M) \to \RR$ be a sequence such that $\Delta^{r+1}_1 f(n) = 0$ for $N \leq n < N+ M-r-1$. Then $f$ coincides with a polynomial of degree at most $r$.
\end{lemma}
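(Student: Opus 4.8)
The plan is to argue by induction on $r$, with the base case $r=0$ and the inductive step built around passing to the first discrete derivative $g = \Delta_1 f$. For $r = 0$, the hypothesis says $\Delta_1 f(n) = f(n+1) - f(n) = 0$ for $N \leq n < N+M-1$, so $f$ is constant on $[N,N+M)$ and hence coincides with a degree-$0$ polynomial (this conclusion is also correct, vacuously, when $M \leq 1$). I would state the inductive claim uniformly over all $M$, so that the degenerate ranges where $[N, N+M-r-1)$ is empty cause no trouble: on fewer than $r+2$ points every function interpolates to a polynomial of degree at most $r$, and the induction reproduces this automatically.

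For the inductive step, assume the statement holds for $r-1$ (all $M$), and suppose $\Delta^{r+1}_1 f(n) = 0$ for $N \leq n < N+M-r-1$. Put $g = \Delta_1 f$, defined on $[N, N+M-1)$. Since $\Delta^{r}_1 g(n) = \Delta^{r+1}_1 f(n) = 0$ for $N \leq n < N+(M-1)-(r-1)-1 = N+M-r-1$, the inductive hypothesis applied to $g$ yields a polynomial $q$ of degree at most $r-1$ with $g(n) = q(n)$ for all $n \in [N, N+M-1)$. The crux is then to \emph{antidifference} $q$: I claim there is a polynomial $p$ of degree at most $r$ with $\Delta_1 p = q$ as a polynomial identity. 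Granting this, $\Delta_1(f - p)(n) = g(n) - q(n) = 0$ on $[N, N+M-1)$, so the base case (applied to $f-p$ on $[N,N+M)$) gives a constant $c$ with $f - p = c$ on $[N,N+M)$, whence $f = p + c$ is a polynomial of degree at most $r$, as required.

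The one nontrivial ingredient — and the step I would flag as the "main obstacle", though it is entirely routine — is the surjectivity of $\Delta_1$ from polynomials of degree $\leq r$ onto polynomials of degree $\leq r-1$. One clean way is to work in the binomial basis: since $\Delta_1 \binom{x}{k} = \binom{x}{k-1}$ for $k \geq 1$, writing $q = \sum_{k=0}^{r-1} c_k \binom{x}{k}$ one simply takes $p = \sum_{k=0}^{r-1} c_k \binom{x}{k+1}$, which has degree at most $r$ and satisfies $\Delta_1 p = q$. Alternatively, one can invoke Lemma~\ref{lem:symmetric-discrete-degree}\ref{lem:symmetric-discrete-degree:i}, which shows $\Delta_1$ lowers degree by exactly one on polynomials of positive degree and kills constants; a dimension count ($\dim$ of degree-$\leq r$ polynomials is $r+1$, kernel is the constants, so the image is $r$-dimensional and must equal the degree-$\leq r-1$ polynomials) then gives the same conclusion. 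Either way the existence of $p$ is immediate, and the induction closes.
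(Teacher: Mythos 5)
Your proof is correct, and it takes a genuinely different route from the paper's. You induct on $r$: pass to $g = \Delta_1 f$, apply the inductive hypothesis to obtain a polynomial $q$ of degree at most $r-1$ matching $g$, antidifference $q$ to a polynomial $p$ of degree at most $r$ (via the binomial basis or a dimension count), and conclude $f - p$ is constant by the $r=0$ case. The paper instead does it in one shot: it first observes that $\Delta_1^{r+1}$ annihilates all polynomials of degree at most $r$, so one may subtract the degree-$\leq r$ Lagrange interpolant of $f$ at $N, N+1, \dots, N+r$ and thereby reduce to the case $f(N) = \dots = f(N+r) = 0$; it then unfolds $\Delta_1^{r+1} f(n) = 0$ as a linear recurrence expressing $f(n+r+1)$ in terms of $f(n), \dots, f(n+r)$, and induction on $n$ (rather than on $r$) shows $f$ vanishes identically on $[N, N+M)$. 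The paper's argument is shorter and avoids the antidifferencing lemma entirely — the normalization by Lagrange interpolation and the recurrence do all the work — whereas your version packages the degree-lowering structure more explicitly and handles the degenerate short-interval cases uniformly, which the paper leaves implicit. Both are complete; your "main obstacle" (surjectivity of $\Delta_1$ onto lower-degree polynomials) is indeed the only ingredient the paper's approach sidesteps.
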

\begin{proof}
	For any polynomial $p$ of degree at most $r$ we have $\Delta^{r+1} p = 0$, so we may freely replace $f$ with $f-p$. Applying Lagrange interpolation, we may thus assume that $f(N) = f(N+1) = \dots = f(N+r) = 0$. Since $\Delta^{r+1}_1 f(n) = 0$ for all $n$ where it is defined, we see that if we have $f(n) = f(n+1) = \dots = f(n+r) = 0$ for some $N \leq n < N+M-r-1$ then also $f(n+r+1) = 0$. Reasoning by induction with respect to $n$ we thus conclude that $f(m) = 0$ for all $N \leq m < N+M$. In particular, $f$ is a polynomial of degree at most $r$
\end{proof}

\subsection{$\checkmark$Emulating multiplication}
Using the results obtained in Section \ref{subsec:discrete-derivatives}, we next show how to use property $\Pint_d$ to emulate multiplication. Recall that $\Pint_d$ implies that we can find arbitrarily long intervals $[N,N+M)$ where $\nint{f}$ agrees with an integer-valued polynomial $p$ (in the sense that $\nint{f}(N+m) = p(m)$). What is more, we can use Lemma \ref{lem:symmetric-discrete-char-poly} to detect intervals with the property mentioned above. Given such an interval, for $1 \leq a,b,c \leq M$ we can use Lemma \ref{lem:symmetric-discrete-degree^d} to express the property that $ab=c$ as $\DeltaS^{d-2} p(a,b) = \DeltaS^{d-2} p(c,1)$. We now put this plan into practice.

\begin{lemma}\label{lem:polynomial-sequence-agreement}
	Let $d \geq 2$ and let $\pi_d(N,M)$ be the sentence given by
	\begin{align*}
	\exists c \neq 0 \ \forall 0 \leq m < M-d \  \Delta^{d-1}_1 \nint{f}(N+m) = c.   
	\end{align*}
	Then $\pi_d(N,M)$ holds if and only if there exists a polynomial $p$ with degree exactly $d-1$ such that $\nint{f}(N+m) = p(m)$ for all $0 \leq m < M$.
\end{lemma}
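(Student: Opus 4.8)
The plan is to read off the equivalence from the two halves of the standard dictionary between polynomial sequences and vanishing iterated discrete derivatives, namely from item~\ref{lem:symmetric-discrete-degree:i} of Lemma~\ref{lem:symmetric-discrete-degree} in one direction and from Lemma~\ref{lem:symmetric-discrete-char-poly} in the other, together with a one-line computation pinning the constant $c$ to a multiple of a leading coefficient.

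\textbf{The backward direction.} Suppose $\nint{f}(N+m)=p(m)$ for all $0\le m<M$, where $p$ has degree $d-1$ with leading coefficient $a\ne 0$. For each $m$ with $0\le m<M-d$, the quantity $\Delta^{d-1}_1\nint{f}(N+m)$ depends only on the values $\nint{f}(N+m),\dots,\nint{f}(N+m+d-1)$, all of which lie in the window, so it equals $\Delta^{d-1}_1 p(m)$. Iterating item~\ref{lem:symmetric-discrete-degree:i} of Lemma~\ref{lem:symmetric-discrete-degree} exactly $d-1$ times, each time with step size $1$, shows that $\Delta^{d-1}_1 p$ is the constant polynomial $(d-1)!\,a$. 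Hence $\pi_d(N,M)$ holds, witnessed by $c=(d-1)!\,a\ne 0$.

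\textbf{The forward direction.} Suppose $\pi_d(N,M)$ holds with witness $c\ne 0$, so $\Delta^{d-1}_1\nint{f}(N+m)=c$ over the stated range of $m$. Applying $\Delta_1$ once more, $\Delta^{d}_1\nint{f}(N+m)=c-c=0$ whenever two consecutive values of $\Delta^{d-1}_1\nint{f}$ are both known to equal $c$. I would then apply Lemma~\ref{lem:symmetric-discrete-char-poly} with $r=d-1$ to the restriction of $\nint{f}$ to the window, obtaining a polynomial $p$ of degree \emph{at most} $d-1$ with $\nint{f}(N+m)=p(m)$ for $0\le m<M$. To upgrade ``at most'' to ``exactly'', fix any admissible $m$: since $\nint f$ agrees with $p$ on the window, $c=\Delta^{d-1}_1\nint{f}(N+m)=\Delta^{d-1}_1 p(m)$, and by the same iteration of item~\ref{lem:symmetric-discrete-degree:i} of Lemma~\ref{lem:symmetric-discrete-degree} this equals $(d-1)!$ times the coefficient of $m^{d-1}$ in $p$. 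As $c\ne 0$, that coefficient is nonzero, so $\deg p=d-1$.

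\textbf{Expected obstacle.} The only genuinely delicate point is the arithmetic of index ranges: $\Delta^{d-1}_1\nint{f}(N+m)$ reads $\nint{f}$ at $N+m$ through $N+m+d-1$, so the quantifier range of $\pi_d$, the range over which $\Delta^{d}_1\nint{f}$ vanishes, and the hypothesis range of Lemma~\ref{lem:symmetric-discrete-char-poly} must be matched up carefully; in the worst case this forces the effective window to shrink by a bounded amount (and the equivalence is anyway only needed, and only used later, for $M$ large compared to $d$, where such an adjustment is harmless). Beyond this bookkeeping the proof is a direct appeal to results already established. It is worth noting in passing that $\pi_d(N,M)$ is a bona fide first-order formula over $\{<,+,\nint{f}\}$: $\Delta^{d-1}_1\nint{f}(N+m)$ is an explicit $\ZZ$-linear combination of $\nint{f}$-values (multiplication by the fixed binomial coefficients, and subtraction, being definable in $\ZZ$ from $+$), and $\forall\,0\le m<M-d$ is a bounded quantifier.
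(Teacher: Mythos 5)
Your proof is correct and takes essentially the same route as the paper: the backward direction iterates Lemma \ref{lem:symmetric-discrete-degree} to compute $\Delta_1^{d-1}p = (d-1)!\,a_{d-1}$, and the forward direction applies $\Delta_1$ once more and invokes Lemma \ref{lem:symmetric-discrete-char-poly} to get a polynomial of degree at most $d-1$, with degree exactly $d-1$ forced by $c\ne 0$ (the paper phrases this last step as a contradiction, you phrase it directly; these are the same argument). Your remark about off-by-one bookkeeping in the index ranges is apt — the paper glosses over it as well — but as you note it is harmless for the intended use.
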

\begin{proof}
	Suppose that $\pi(N,M)$ holds. Applying $\Delta_1$ once more, we conclude from Lemma \ref{lem:symmetric-discrete-char-poly} that $\nint{f}$ agrees with a polynomial $p$ of degree at most $d-1$ on $[N,N+m)$. If $p$ had degree strictly less than $d-1$ then, by repeated application of Lemma \ref{lem:symmetric-discrete-degree}, we would have $\Delta^{d-1}_1 \nint{f}(N+m) = 0 \neq c$, contradicting $\pi_d(N,M)$. Thus, $p$ has degree exactly $d-1$, as needed.
	
	Suppose now that there exists a polynomial $p$ with degree exactly $d-1$ such that $\nint{f}(N+m) = p(m)$ for all $0 \leq m < M$. Then, by repeated application of Lemma \ref{lem:symmetric-discrete-degree}, we have $\Delta_1^{d-1} \nint{f}(N+m) = (d-1)! a_{d-1}$, where $a_{d-1}$ is the leading coefficient of $p$. Thus, setting $c = (d-1)! a_{d-1}$ we see that $\pi_d(N,M)$ holds.
\end{proof}

\begin{lemma}\label{lem:multiplication-definition}
	Let $d \geq 3$ and let $\mu_d(n,m,q)$ be the sentence given by
	\begin{align*}
	\exists M > \max(n,m,q) \ \exists N \ \pi_d(N,M) \wedge 
	\Delta_1^{d-3} \Delta_n \Delta_m \nint{f}(N) =
	\Delta_1^{d-2} \Delta_q \nint{f}(N).
	\end{align*}
	Assume that $f$ enjoys property $\Pint_d$ and $n,m,q \geq 1$. Then, for $n,m,q \in \NN$ we have that $\mu_d(n,m,q)$ holds if and only if $q = nm$.
\end{lemma}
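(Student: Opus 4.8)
The plan is to push both conjuncts of $\mu_d(n,m,q)$ through the machinery of the previous two subsections. By Lemma~\ref{lem:polynomial-sequence-agreement}, the clause $\pi_d(N,M)$ is equivalent to the existence of a polynomial $p$ of degree \emph{exactly} $d-1$, say with leading coefficient $a_{d-1}\neq0$, such that $\nint{f}(N+j)=p(j)$ for all $0\le j<M$. Assume we are in this situation and, crucially, that every argument at which $\nint{f}$ is evaluated on either side of the displayed identity lies in $[N,N+M)$; then $\nint{f}(N+\cdot)$ may be replaced by $p$ there. Now iterate part~\ref{lem:symmetric-discrete-degree:i} of Lemma~\ref{lem:symmetric-discrete-degree}: a discrete derivative $\Delta_h$ sends a polynomial of degree $r\ge1$ with leading coefficient $b$ to one of degree $r-1$ with leading coefficient $rbh$, and both $\Delta_1^{d-3}\Delta_n\Delta_m$ and $\Delta_1^{d-2}\Delta_q$ are composites of exactly $d-1=\deg p$ such operators, whence
\[
\Delta_1^{d-3}\Delta_n\Delta_m\nint{f}(N)=(d-1)!\,a_{d-1}\,nm
\qquad\text{and}\qquad
\Delta_1^{d-2}\Delta_q\nint{f}(N)=(d-1)!\,a_{d-1}\,q.
\]
Since $(d-1)!\,a_{d-1}\neq0$, the second conjunct of $\mu_d(n,m,q)$ is then equivalent to $nm=q$; this reduction is the heart of the matter.

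For the direction $q=nm\Rightarrow\mu_d(n,m,q)$, I would invoke property $\Pint_d$ with $\e=1/2$ and with $M$ taken large --- large enough that the finitely many evaluation points above remain inside $[N,N+M)$, say $M>n+m+q+d$, which is permitted since $\mu_d$ only requires $M>\max(n,m,q)$ and enlarging $M$ is harmless. This produces $N$ and an integer-valued polynomial $p$ of degree exactly $d-1$ with $\abs{f(N+j)-p(j)}<1/2$, hence $\nint{f}(N+j)=p(j)$, for all $0\le j<M$. By Lemma~\ref{lem:polynomial-sequence-agreement} the clause $\pi_d(N,M)$ holds, and by the computation just made the displayed identity in $\mu_d$ becomes $(d-1)!\,a_{d-1}\,nm=(d-1)!\,a_{d-1}\,q$, which is true because $nm=q$. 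Hence $\mu_d(n,m,q)$ holds.

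Conversely, suppose $\mu_d(n,m,q)$ holds and fix a witnessing pair $N,M$. Lemma~\ref{lem:polynomial-sequence-agreement} applied to $\pi_d(N,M)$ yields a polynomial $p$ of degree exactly $d-1$ with $\nint{f}(N+j)=p(j)$ on $[N,N+M)$, and as the evaluation points of the two discrete derivatives lie in this interval, the computation above forces $(d-1)!\,a_{d-1}\,nm=(d-1)!\,a_{d-1}\,q$; dividing by the nonzero constant $(d-1)!\,a_{d-1}$ gives $nm=q$. The single delicate point throughout is exactly this interval bookkeeping: one must keep $M$ large enough that $\Delta_1^{d-3}\Delta_n\Delta_m\nint{f}(N)$ and $\Delta_1^{d-2}\Delta_q\nint{f}(N)$ only ever reference $\nint{f}$ on $[N,N+M)$. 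It is also here that the hypothesis $d\ge3$ (so that $\Delta_1^{d-3}$ makes sense) and the insistence in $\pi_d$ on degree \emph{exactly} $d-1$ (so that $a_{d-1}\neq0$, failing which both sides would collapse to $0$ and the biconditional would break) pay off.
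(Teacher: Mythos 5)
Your proposal is correct and follows essentially the same route as the paper: use Lemma~\ref{lem:polynomial-sequence-agreement} to replace $\nint{f}(N+\cdot)$ by a polynomial $p$ of degree exactly $d-1$ with leading coefficient $a_{d-1}\neq 0$, then compute both iterated discrete derivatives of $p$. The paper packages this computation through the symmetric discrete derivative and Lemma~\ref{lem:symmetric-discrete-degree^d}; you instead iterate part~\ref{lem:symmetric-discrete-degree:i} of Lemma~\ref{lem:symmetric-discrete-degree} directly, and both routes yield $(d-1)!\,a_{d-1}\,nm$ and $(d-1)!\,a_{d-1}\,q$, so the two reductions coincide. One place where you are in fact more careful than the paper: for the direction $q=nm\Rightarrow\mu_d(n,m,q)$ the paper takes $M=q+1$, which is too small when $n$ or $m$ is small --- for instance $n=m=1$ forces evaluation points up to $N+n+m+d-3=N+d-1\geq N+M$ once $d\geq 3$ --- whereas your choice $M>n+m+q+d$ sidesteps this. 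The bookkeeping caveat you flag in the forward direction is a genuine one, but it is inherited from the paper's own formulation and proof: a witness $M>\max(n,m,q)$ need not place all the points at which $\nint{f}$ is sampled inside $[N,N+M)$, and if $M\leq d$ then $\pi_d(N,M)$ is even vacuously true. The paper's proof passes over this silently; you at least name it as the delicate step. A clean fix, compatible with both write-ups, is to strengthen the lower bound on $M$ in the definition of $\mu_d$ to something like $M>n+m+q+d$.
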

\begin{proof}
Suppose that $\mu(n,m,q)$ holds. Pick admissible $M$ and $N$. Since, in particular, $\pi_d(N,M)$ holds, we know from Lemma \ref{lem:polynomial-sequence-agreement} that there exists a polynomial $p$ of degree $d-1$ such that $\nint{f}(N+m) = p(m)$ for $0 \leq m < M$. The equality between the discrete derivatives in the definition of $\mu(n,m,q)$ can now more simply be expressed as $\DeltaS^{d-2}p(n,m) = \DeltaS^{d-2}p(q,1)$. By Lemma \ref{lem:symmetric-discrete-degree^d}, this implies that $nm = q$, as needed. 

Next, suppose that $q = nm$. Pick $M = q+1$. By $\Pint_d$, we can find $N$ and a polynomial $p$ of degree exactly $d-1$ such that $f(N+m) = p(m)$ for $0 \leq m < M$. By Lemma \ref{lem:polynomial-sequence-agreement}, $\pi_d(M,N)$ holds. By Lemma \ref{lem:symmetric-discrete-degree}, we have 
$\DeltaS^{d-2}p(n,m) = (d-1)! a_{d-1} nm = (d-1)! a_{d-1} q = \DeltaS^{d-2}p(q,1)$. Hence $\mu(n,m,q)$ holds, as needed.
\end{proof}

\begin{proof}[Proof of Lemma \ref{lem:pd-to-undecidable}]
It follows from Lemma \ref{lem:multiplication-definition} that multiplication on $\NN$ is definable in $\theory{\ZZ; <, +, \nint{f}}$, and extending it to $\ZZ$ is immediate. Since $\theory{\ZZ; <, +, \times}$ is undecidable, so is $\theory{\ZZ; <, +, \nint{f}}$.
\end{proof}

Combining Lemmas \ref{lem:pint} and \ref{lem:pd-to-undecidable}, we conclude that for each $d \geq 3$ and each Hardy field function $f$ with $x^{d-1} \prec f(x) \prec x^d$, the theory $\theory{\ZZ;<,+,\nint{f}}$ is undecidable. This completes the proof of the first case of Theorem \ref{thm:main}.

\subsection{$\checkmark$Generalisation to exactly polynomial growth}
\label{sec:super-linear-generalisation}
Finally, we show how the argument in the earlier sections to sequences with exactly polynomial growth. Recall that previously we considered Hardy a field function $f$ with $x^{d-1} \prec f(x) \prec x^d$ for some $d \geq 3$. Presently, we will instead assume that $f(x) \sim x^{d-1}$ (note that we make this choice instead of the more natural $f(x) \sim x^{d}$ for the sake of consistency with earlier considerations). 

Significant part of the previously presented argument goes through without any change. Indeed, it remains the case that $f$ enjoys the $P_{d}$ property and can be accurately approximated by the Taylor polynomial $P_{N,d}$. Unfortunately, we are not able to establish the $\Pint_d$ property. When we try to repeat the previous argument, the $d$-tuple formed by the coefficients of $P_{N,d}$, namely
\begin{equation*}
(f(N), f'(N), f''(N)/2, \dots, f^{(d-1)}(N)/(d-1)!),
\end{equation*}
is not equidistributed modulo $1$. Indeed, the top coefficient $f^{(d-1)}(N)/(d-1)!$ can easily be shown to converge to a constant, and the behaviour of the remaining coefficients is potentially more complicated.

Because of the aforementioned limitation, we adapt the remainder of the argument to use property $P_d$ instead of $\Pint_d$. This introduces some complications since we are forced to apply discrete derivatives to functions of the form $\nint{p}$, where $p$ is a polynomial, rather than to polynomials. Fortunately, the following lemma allows us to control errors arising from the rounding operation; we will apply it to $h(n) = p(n) - \nint{p(n)}$.
\begin{lemma}\label{lem:error-discr-der}
	Let $h \colon \ZZ \to \RR$ be a sequence with $\abs{h(n)} \leq \e$ for all $n \in \ZZ$. Then
	\[ 
	\abs{ \Delta_{m_1} \Delta_{m_2} \dots \Delta_{m_r} h(n)} \leq 2^r \e
	\]
	for all integers $r \geq 1$ and all $n,m_1,m_2,\dots,m_r \in \ZZ$. 
\end{lemma}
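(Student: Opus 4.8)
The plan is to prove this by a straightforward induction on $r$, exploiting the fact that each application of a discrete derivative operator at most doubles the sup-norm of a bounded sequence, together with the commutativity of the $\Delta_m$ operators noted in the definition.

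For the base case $r = 1$, I would simply observe that for any $n, m_1 \in \ZZ$ we have $\abs{\Delta_{m_1} h(n)} = \abs{h(n+m_1) - h(n)} \leq \abs{h(n+m_1)} + \abs{h(n)} \leq 2\e$, which is the claim. For the inductive step, assuming the statement holds for $r-1$ (for every sequence and every choice of shifts), I would set $g \defeq \Delta_{m_r} h$. The base-case estimate gives $\abs{g(n)} \leq 2\e$ for all $n \in \ZZ$, so $g$ is a bounded sequence to which the inductive hypothesis applies: $\abs{\Delta_{m_1} \Delta_{m_2} \cdots \Delta_{m_{r-1}} g(n)} \leq 2^{r-1}(2\e) = 2^r \e$. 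Since the discrete derivative operators commute, $\Delta_{m_1} \cdots \Delta_{m_{r-1}} \Delta_{m_r} h(n) = \Delta_{m_1} \cdots \Delta_{m_{r-1}} g(n)$, and the result follows.

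There is no real obstacle here; the only point requiring a little care is to phrase the inductive hypothesis uniformly over all sequences bounded by a given constant (rather than for the fixed $h$ and $\e$ only), so that it can legitimately be applied to $g$ with the doubled bound $2\e$. Alternatively, one can avoid even this by noting that $\Delta_{m_1} \cdots \Delta_{m_r} h$ is an alternating sum of $2^r$ values of $h$ — indeed $\Delta_{m_1} \cdots \Delta_{m_r} h(n) = \sum_{S \subseteq \{1,\dots,r\}} (-1)^{r - \abs{S}} h\bigl(n + \sum_{i \in S} m_i\bigr)$ — and bounding each term by $\e$, but the inductive argument is cleaner and sidesteps verifying this closed form.
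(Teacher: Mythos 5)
Your proof is correct and follows exactly the same route the paper sketches: a one-line base case showing $\abs{\Delta_{m_1} h(n)} \leq 2\e$, followed by a standard induction on $r$. (One small remark: the appeal to commutativity is unnecessary, since with $g = \Delta_{m_r} h$ the identity $\Delta_{m_1}\cdots\Delta_{m_{r-1}} g = \Delta_{m_1}\cdots\Delta_{m_{r-1}}\Delta_{m_r} h$ holds by definition — but this does not affect correctness.)
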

\begin{proof}
	For $r = 1$, the result follows from a straightforward computation. For $r > 1$, we use a standard inductive argument.
\end{proof}

Using Lemma \ref{lem:error-discr-der}, we define a coarser variant of multiplication, which we then bootstrap to a definition of multiplication. This finishes the proof of the relevant case of Theorem \ref{thm:main}. We now put the strategy discussed above into practice. Formally, we establish the following result.

\begin{proposition}
\label{prp:integer-exponent}
Let $f \colon [x_0,\infty) \to \RR$ be a Hardy field function where $f(x) \sim x^{d-1}$ for some $d \geq 3$. Then the theory $\theory{\ZZ; <, +, \nint{f}}$ is undecidable.
\end{proposition}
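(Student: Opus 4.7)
The plan is to mimic the proof of Lemma~\ref{lem:pd-to-undecidable} using the weaker $P_d$ property (Lemma~\ref{lem:pd}) in place of $\Pint_d$, absorbing rounding errors via Lemma~\ref{lem:error-discr-der}, and sharpening the resulting ``coarse'' multiplication into exact multiplication by a rescaling trick.

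\emph{Main identity.} Write $f(x) \sim c x^{d-1}$ with $c > 0$ (the other sign is symmetric). Given any $M$ and any $\e \in (0,1/2)$, the $P_d$ property supplies $N$ and a polynomial $p$ of degree $d-1$ with $\abs{f(N+m) - p(m)} < \e$ for $0 \le m < M$; for $N$ large, its leading coefficient $a_{d-1} = f^{(d-1)}(N)/(d-1)!$ satisfies $a_{d-1} > c/2$. Writing $h(m) := \nint{f}(N+m) - p(m)$, we have $\abs{h(m)} \le 1/2 + \e < 1$. Combining the exact polynomial identity (a direct analogue of Lemma~\ref{lem:symmetric-discrete-degree^d} for non-symmetric discrete derivatives)
\begin{equation*}
\Delta_n\Delta_m\Delta_1^{d-3} p = (d-1)!\,a_{d-1}\,nm, \qquad \Delta_1^{d-2}\Delta_q p = (d-1)!\,a_{d-1}\,q,
\end{equation*}
with Lemma~\ref{lem:error-discr-der} applied to $h$, for any $n,m,q$ with $n+m, q+d \le M$ we obtain
\begin{equation*}
\Delta_n\Delta_m\Delta_1^{d-3}\nint{f}(N) - \Delta_1^{d-2}\Delta_q\nint{f}(N) = (d-1)!\,a_{d-1}(nm-q) + E, \qquad \abs{E} < 2^d.
\end{equation*}

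\emph{Coarse multiplication.} Introduce a first-order polynomial-detector predicate $\rho(N,M)$ arranged so that (i) the $P_d$ property guarantees a witness $N$ for every $M$, and (ii) whenever $\rho(N,M)$ holds, the main identity still applies with a sizeable implicit $a_{d-1}$. A first attempt is
\begin{equation*}
\rho(N,M) := \Delta_1^{d-1}\nint{f}(N) \ge 1 \,\wedge\, \forall\, 0 \le i < M-d\ \ \abs{\Delta_1^d\nint{f}(N+i)} \le 2^d,
\end{equation*}
perhaps strengthened to constrain lower-order derivatives as well. Define
\begin{equation*}
R(n,m,q) := \exists N\, \exists M > \max(n+m,q)+d\ \ \rho(N,M) \wedge \bigl|\Delta_n\Delta_m\Delta_1^{d-3}\nint{f}(N) - \Delta_1^{d-2}\Delta_q\nint{f}(N)\bigr| \le 2^d.
\end{equation*}
The main identity then shows that $R(n,m,q)$ implies $\abs{nm-q} \le C$ for an absolute constant $C = O(2^d/((d-1)!\,c))$, while $R(n,m,nm)$ is witnessed by any $N$ arising from $P_d$.

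\emph{Bootstrap and obstacle.} To convert $R$ into exact multiplication, exploit the identity $(kn)(km) = k^2 \cdot nm$ and set $\mu(n,m,q) := \forall k \ge 1\ R(kn,km,k^2 q)$. If $nm = q$ then $k^2 q = (kn)(km)$, so each $R(kn,km,k^2 q)$ is a true instance of coarse multiplication. If $nm \neq q$ then for any $k > \sqrt{C}$ we have $\abs{(kn)(km) - k^2 q} = k^2\abs{nm-q} > C$, forcing $R(kn,km,k^2 q)$ to fail. Hence $\mu(n,m,q) \iff nm = q$ on $\NN$; this extends to $\ZZ$, and undecidability of $\theory{\ZZ;<,+,\nint{f}}$ then follows from that of $\theory{\ZZ;<,+,\times}$. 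The main obstacle I foresee is designing $\rho$ so that its two requirements coexist: merely bounding $\Delta_1^d\nint{f}$ controls the leading implicit coefficient but leaves $h$ free to drift, potentially inflating $\abs{E}$ beyond $2^d$. Strengthening $\rho$ to also pin down lower-order derivatives, or invoking the Hardy-field asymptotics of $f$ more explicitly when choosing $N$, appears to be the route required to make both the existence side ($R(n,m,nm)$) and the bound side ($R(n,m,q) \Rightarrow \abs{nm-q} \le C$) compatible.
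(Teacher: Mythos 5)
The core toolkit you reach for—replacing $\Pint_d$ with approximate polynomial agreement, absorbing the rounding error via Lemma~\ref{lem:error-discr-der}, and using a rescaling to turn ``coarse'' multiplication into exact multiplication—is the same as in the paper. Your rescaling is a little different from the paper's (you scale both factors and use $(kn)(km) = k^2\,nm$ for all $k$, while the paper fixes one absolute constant $D$ depending on $\alpha = \lim_N f^{(d-1)}(N)/(d-1)!$ and uses $F_N(Dn,m,Dq)$), but both versions work; yours has the mild advantage of not needing to name a bound involving $\alpha$, at the cost of one extra universal quantifier.

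The gap is exactly where you flag it, and it is genuine: the route through a first-order polynomial detector $\rho(N,M)$ does not close. Bounding $\abs{\Delta_1^d \nint{f}}$ leaves $h = \nint{f} - p$ free to drift by $\Theta(M^d)$ over $[N,N+M)$, which ruins the error term in the main identity; strengthening $\rho$ to demand exact polynomial agreement (so that Lemma~\ref{lem:symmetric-discrete-char-poly} applies) is also hopeless here, because for $f(x) \sim x^{d-1}$ the leading Taylor coefficient converges to the (generically irrational) constant $\alpha$, so $\nint{f}$ is never exactly a degree-$(d-1)$ integer polynomial on a long interval. Designing an analogue of $\pi_d$ from Lemma~\ref{lem:polynomial-sequence-agreement} that is both satisfiable and sound in this regime is stated as an open problem in the paper's closing remark of the section.

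The paper evades the detector entirely by exploiting a property \emph{stronger} than $P_d$: for $f(x) \sim x^{d-1}$, the Taylor approximation $\abs{f(N+m) - p_N(m)} < 1/10$ (with $p_N$ the Taylor polynomial with leading coefficient replaced by $\alpha$) holds for \emph{all} sufficiently large $N$ as a function of $M$, not merely for some $N$. Because the good $N$'s are cofinal, the paper can define multiplication by
\begin{equation*}
\mu'(n,m,q) \;:=\; \forall N_0\ \exists N \geq N_0\ \abs{F_N(Dn,m,Dq)} \leq 2^{d},
\end{equation*}
with no predicate certifying that $N$ is good: soundness holds because any witness $N$ with $N \geq N_0$ for $N_0$ large is automatically good, and completeness holds because all large $N$ are good. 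If you want to repair your proposal without inventing a detector, you should replace the outer $\exists N\ \exists M\ \rho(N,M) \wedge \cdots$ in your definition of $R$ by this $\forall N_0\ \exists N \geq N_0\ \cdots$ quantification and invoke the uniform-in-$N$ Taylor estimate explicitly; your $\forall k$ rescaling can then stay as is.
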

\begin{proof}

Fix an integer $M$ and let $N$ be sufficiently large such that the Hardy field function $f$ is closely approximated by the degree-$(d-1)$ polynomial $P_{N,d}$, in the sense that we have $\abs{f(N+m)-P_{N,d}(m)} < 1/10$ for all $0 \leq m < M$. Recall that the leading coefficient of $P_{N,d}$ is $f^{(d-1)}(N)/(d-1)!$, which converges to some non-zero constant $\a$ as $N \to \infty$. Let $p$ be the polynomial obtained from $P_{N,d}$ by replacing the leading coefficient with $\a$. Picking larger $N$ if necessary, we may freely assume that $\abs{f(N+m)-p(m)} < 1/10$ for all $0 \leq m < M$.

Note that, since $p$ is a polynomial of degree $d-1$, we can apply Lemma \ref{lem:symmetric-discrete-degree^d} to it and get that $\DeltaS^{d-2} p(n,m) = (d-1)! \a nm$. Bearing in mind that we aim to adapt Lemma \ref{lem:multiplication-definition}, we use Lemma \ref{lem:error-discr-der} to approximate
\begin{align*}
\abs{ \Delta_1^{d-3} \Delta_n \Delta_m \nint{f}(N) - (d-1)! \a nm } \leq 2^{d-1},
\end{align*}
provided that $0 \leq n,m < M$. This motivates us to consider, for $0 \leq a,b,c < M$, the quantity $F_N(a,b,c)$ defined as follows:
\begin{equation*}
F_N(a,b,c) = \Delta_1^{d-3} \Delta_a \Delta_b \nint{f}(N) - \Delta_1^{d-2} \Delta_c \nint{f}(N).
\end{equation*}
The estimate obtained above implies that we have
\begin{align*}
	\abs{ F_N(a,b,c) - (d-1!) \a (ab - c) } \leq 2^d.
\end{align*}
Fix an integer $D \geq {100 \cdot 2^d}/{\min(1,(d-1)!\a)}$. The estimate above allows us to prove the following lemma:
\begin{lemma} \label{lem:coarse-multiplication}
Let $0 \leq a,b,c < M$. Then $\abs{F_N(Da,b,Dc)} \leq 2^{d}$ if and only if $ab = c$.
\end{lemma}
\begin{proof}
For the rightwards direction of proof, first suppose that $ab \neq c$. Since $ab - c$ is a non-zero integer, it follows that $\abs{Dab-Dc} \geq D$. Hence $\abs{F_N(Da,b,Dc)} \geq (d-1)!\a D-2^d > 2^d$, as needed.

For the leftwards direction of proof, suppose that $ab = c$. Then $Dab - Dc = 0$ as well. It follows that $\abs{F_N(Da, b, Dc)} \leq 2^{d}$ as required.
\end{proof}

Note that $\abs{F_N(Da, b, Dc)}$ is definable in $\theory{\ZZ; <, +, \nint{f}}$. Thus, Lemma \ref{lem:coarse-multiplication} allows us to define multiplication in $\theory{\ZZ; <, +, \nint{f}}$. More formally, let $\mu'(n,m,q)$ denote the formula 
\begin{equation*}
\forall N_0\ \exists N \geq N_0\ \abs{F_N(Da,b,Dc)} \leq 2^{d}.
\end{equation*}
It follows from the preceding discussion (in particular Lemma \ref{lem:coarse-multiplication}) that $\mu'(n,m,q)$ holds if and only if $ab = c$. Recalling that $\theory{\ZZ; <, +, \times}$ is undecidable and that multiplication is definable in $\theory{\ZZ; <, +, \nint{f}}$, we conclude that $\theory{\ZZ; <, +, \nint{f}}$ is undecidable.	
\end{proof}

\begin{remark}
	One could use the techniques used here in order to establish Theorem \ref{thm:main} also in the case where $x^{d-1} \prec f(x) \prec x^d$. We take the route discussed earlier because we consider it to be more elegant, and because it has the added advantage of allowing us to identify the property $\Pint_d$. 
\end{remark}

\begin{remark}
	We note that previously, we were able to establish undecidability of $\theory{\ZZ; <, +, \nint{f}}$ as the consequence purely of the property $\Pint_d$. This is in contrast with the argument discussed presently, which uses a property strictly stronger than $P_d$, namely that $f$ is closely approximable by a polynomial on the interval $[N,N+M)$ for \emph{all} $N$ that are sufficiently large (as a function of $M$). It would be interesting to determine if the property $P_d$ by itself is sufficient to establish undecidability. The key difficulty to this approach is finding a suitable analogue of Lemma \ref{lem:polynomial-sequence-agreement}.
\end{remark}

\section{$\checkmark$Proof of the near-linear case}

In this section, we prove the second case of Theorem \ref{thm:main} where $f$ grows super-linearly but sub-quadratically i.e.\ when $x \prec f(x)$ and $f(x) \prec x^2$. To this end, we first define multiplication over a limited domain, using the fact that over certain intervals $f$ can be closely approximated by a linear function. We then use difference sets to extend this definition of multiplication to the whole of $\ZZ \times \ZZ$ and thus show $\theory{\ZZ; <, +, \nint{f}}$ undecidable.

\subsection{$\checkmark$Multiplicative intervals}
\label{subsec:multiplicative-intervals}

We first define a property $\lambda(N,M,n)$ which, informally, states that over the interval $[N, N+M)$ the function $\nint{f}$ defines an arithmetic progression with step $n$. Formally, we define $\lambda$ by
\begin{equation*}
\lambda(N,M,n) \defeq (\forall\ 0 \leq m < M)\ \nint{f}(N+m+1) = \nint{f}(N+m)+n.
\end{equation*}

It is routine to show that $\lambda(N,M,n)$ holds if and only if for all $0 \leq m < M$ we have $\nint{f}(N+m) - \nint{f}(N) = mn$. Our next step is to establish sufficient conditions for $\lambda(N,M,n)$ to hold.

\begin{lemma}
\label{lem:lambda-conditions-new}
	Let $f \colon [x_0,\infty) \to \RR$ be a Hardy field function with $x \prec f(x) \prec x^2$. Then there exists $x_1 \geq x_0$ such that $\lambda(N,M,n)$ holds for all $N,M,n \in \NN$ satisfying the following conditions:
\begin{inparaenum}
\item $N \geq x_1$;
\item $\fpa{f(N)} < 1/4$;
\item $\abs{f'(N) - n} < 1/(8M)$;
\item $f''(N) < 1/(8 M^2)$.
\end{inparaenum}
\end{lemma}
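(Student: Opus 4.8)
The plan is to expand $f$ around $N$ by Taylor's theorem with the Lagrange form of the remainder and to show, using the four listed hypotheses, that for every $0 \le m \le M$ the value $f(N+m)$ lies at distance strictly less than $1/2$ from the integer $k+mn$, where $k = \nint{f(N)}$. Once this is established we get $\nint{f}(N+m) = k+mn$ for all such $m$, hence $\nint{f}(N+m+1) - \nint{f}(N+m) = n$ for $0 \le m < M$, which is exactly $\lambda(N,M,n)$.

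The one step that genuinely uses the Hardy field structure --- and the \emph{main obstacle} --- is to control $f''$ not merely at $N$, where hypothesis (iv) applies, but on the whole interval $[N,N+M]$. For this I would first show that $f''$ is eventually strictly positive and eventually strictly decreasing. Positivity: since $x \prec f(x)$, the function $f$ tends to $\infty$ and, being eventually monotone, is eventually strictly increasing, so $f' > 0$ eventually; if we had $f'' \le 0$ eventually then $f'$ would be eventually decreasing and bounded below, hence convergent to a finite limit $L$, and then $\lim_{x\to\infty} f(x)/x = L < \infty$ by L'H\^opital, contradicting $x \prec f(x)$ (the degenerate case $f'' \equiv 0$ is excluded the same way, as it would make $f$ eventually affine). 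Monotonicity: if $f''$ were eventually increasing or constant then $f''(x) \ge c > 0$ for large $x$, whence $f'(x) \gg x$ and $f(x) \gg x^2$, contradicting $f(x) \prec x^2$; thus $f''$ is eventually decreasing (consistently with Lemma~\ref{lem:hardy:derivative}, which gives $f''(x) \ll f(x)/x^2 \to 0$). Accordingly, choose $x_1 \ge x_0$ so that on $[x_1,\infty)$ the function $f$ is smooth and strictly increasing and $f''$ is strictly positive and strictly decreasing; this $x_1$ depends only on $f$, as the statement requires.

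Now fix $N,M,n$ satisfying (i)--(iv) for this $x_1$, and put $k = \nint{f(N)}$, so $\abs{f(N)-k} = \fpa{f(N)} < 1/4$ by (ii). For $0 \le m \le M$, Taylor's theorem provides $z_m \in [0,m]$ with $f(N+m) = f(N) + m f'(N) + \tfrac{m^2}{2} f''(N+z_m)$; since $N + z_m \ge N \ge x_1$, the monotonicity of $f''$ together with (iv) gives $0 < f''(N+z_m) \le f''(N) < 1/(8M^2)$. Using (ii), (iii) and $m \le M$, we obtain
\[
\abs{f(N+m) - (k+mn)} \le \abs{f(N)-k} + m\,\abs{f'(N)-n} + \tfrac{m^2}{2}\,f''(N+z_m) < \tfrac{1}{4} + \tfrac{1}{8} + \tfrac{1}{16} < \tfrac{1}{2}.
\]
As $k+mn$ is an integer, this forces $\nint{f}(N+m) = k+mn$ for every $0 \le m \le M$, and $\lambda(N,M,n)$ follows as explained. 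The remaining points --- that $\lambda$ is equivalent to the arithmetic-progression condition stated right after its definition, and that Taylor's theorem applies because Hardy field functions are smooth on $[x_0,\infty)$ --- are routine.
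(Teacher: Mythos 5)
Your proof is correct and follows essentially the same strategy as the paper: Taylor-expand $f$ at $N$, use the four hypotheses (plus eventual monotonicity of $f''$ from the Hardy field structure) to bound the total error by $1/4 + 1/8 + 1/16 < 1/2$, and conclude that $\nint{f}(N+m) = \nint{f}(N) + mn$. The only difference is cosmetic --- you combine all three error terms into a single estimate and give a more careful justification that $f''$ is eventually positive and decreasing, where the paper handles these points more briskly in two separate steps.
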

\begin{proof}
Informally speaking, we first approximate $f(N+m)$ by $f(N) + f'(N)m$, then approximate $f'(N)$ by $n$, and finally apply rounding to conclude that $\nint{f}(N+m) = \nint{f}(N) + mn$. 

Formally, our first goal is to show that 
\begin{align}\label{eq:594:1}
	\abs{f(N+m) - f(N) - f'(N)m} < \frac{1}{16}.
\end{align}
Towards this end, we consider the Taylor expansion of $f$ at $N$. We have 
\begin{align*}
	f(N+m) = f(N) + f'(N)m + \frac{1}{2} f''(N+t)m^2
\end{align*}
for some $t \in [0,m]$. Thus, the expression on the right side of \eqref{eq:594:1} is
\begin{align*}
	\abs{f(N+m) - f(N) - f'(N)m} = \abs{ \frac{1}{2} f''(N+t)m^2} \leq \frac{M^2}{2} \abs{f''(N+t)}.
\end{align*}
Since $f(x) \prec x^2$, we know that $f''(x) \prec 1$, meaning that $f''(x) \to 0$ as $x \to \infty$. As a consequence, $f''$ is eventually decreasing, and picking $x_1$ sufficiently large we may assume that $f''$ is decreasing on $[N,\infty)$. Thus, condition (1) allows us to simplify our estimate to
\begin{align*}
\abs{f(N+m) - f(N) - f'(N)m} \leq \frac{M^2}{2} \abs{f''(N)} \leq \frac{1}{16},
\end{align*}
where the last inequality follows from condition (4). This completes the proof of \eqref{eq:594:1}.

In order to approximate $f'(N)$ by $n$ we simply use condition (3), which immediately yields
\begin{align*}
	\abs{f'(N)m - nm} < \frac{1}{8}.
\end{align*}
Combining this with \eqref{eq:594:1}, we get
\begin{align}\label{eq:594:3}
	\abs{f(N+m) - f(N) - nm} < \frac{1}{16} + \frac{1}{8} < \frac{1}{4}.
\end{align}

It remains to deal with rounding. We have
\begin{align*}
	 a(N+m)\!-\!a(N)\!-\!nm &= \nint{f(N+m)}\!-\!\nint{f(N)}\!-\!nm 
	 \nint{f(N+m)\!-\!\nint{f(N)}\!-\!nm }.
\end{align*}
Bearing in mind that $\nint{x} = 0$ for all $x$ with $\abs{x} < 1/2$, in order to show that $\nint{f}(N+m) = \nint{f}(N) + nm$, it suffices to estimate that
\begin{align*}
\abs{ f(N+m) - \nint{f(N)} - nm  } &\leq \frac{1}{4} + \abs{ f(N) +  nm - \nint{f(N)} - nm} 
\\ &= \frac{1}{4} + \fpa{f(N)} < \frac{1}{2}
\end{align*}
Note that the first inequality follows from \eqref{eq:594:3}, and the last one from condition (2).

\end{proof}

In the following lemma, we show how the conditions of Lemma \ref{lem:lambda-conditions-new} hold sufficiently often.

\begin{lemma}
\label{lem:lambda-sufficient}

	Let $f \colon [x_0,\infty) \to \RR$ be a Hardy field function with $x \prec f(x) \prec x^2$. Then for any real $x_1 \geq x_0$ and $\e_0,\e_1,\e_2 > 0$ there exists integer $n_0$ such that for each $n \geq n_0$ there exists integer $N$ satisfying the following conditions:
\begin{inparaenum}
\item $N \geq x_1$;
\item $\fpa{f(N)} < \e_0$;
\item $\abs{f'(N) - n} < \e_1$;
\item $f''(N) < \e_2$.
\end{inparaenum}
\end{lemma}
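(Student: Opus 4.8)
The plan is to verify the existence of $N$ satisfying conditions (i)--(iv) by a single application of the equidistribution result, Theorem \ref{thm:hardy:equidist}, applied to the triple of functions $f$, $f'$, and a suitable linear function that we are free to choose. The informal idea: since $x \prec f(x) \prec x^2$, Lemma \ref{lem:hardy:derivative} gives $f'(x) \to \infty$ and $f''(x) \to 0$; so condition (iv) is automatic for all large $N$, and condition (iii) can be met by choosing $N$ in the (large) range where $f'$ passes near the integer $n$ — this is where $n_0$ comes from, taking $n_0$ just above $f'(x_1)$ so that every integer $n \geq n_0$ is actually attained (approximately) by $f'$ somewhere past $x_1$. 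The genuinely new content is forcing condition (ii), $\fpa{f(N)} < \e_0$, \emph{simultaneously} with (iii); this is the coupling that requires equidistribution.

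First I would dispose of condition (iv): since $f(x) \prec x^2$, iterating Lemma \ref{lem:hardy:derivative} twice gives $f''(x) \prec 1$, so $f''(x) \to 0$, and we may fix $x_2 \geq x_1$ with $|f''(x)| < \e_2$ for all $x \geq x_2$; henceforth we only look for $N \geq x_2$. Next, for condition (iii): again $x \prec f(x)$ together with Lemma \ref{lem:hardy:derivative} applied to the eventually-increasing function $f$ shows that $f'(x) \to \infty$ (more carefully, $f' \succ 1$ follows from monotonicity of $f$ and the Hardy-field trichotomy, since $f' \prec 1$ would force $f$ bounded or sub-linear); so $f'$ restricted to $[x_2,\infty)$ is an eventually monotone function tending to $\infty$. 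The key step is then to apply Theorem \ref{thm:hardy:equidist} to the two functions $f(x)$ and $h_n(x) \defeq f(x) - $ (something linear in the local inverse of $f'$) --- more cleanly, I would instead argue as follows: let $g = (f')^{-1}$ be the compositional inverse (a Hardy field function, eventually defined and increasing), so that $f'(g(t)) = t$. For an integer parameter $n$, points near $g(n)$ are exactly the points where $f' \approx n$. Writing $N = \floor{g(n) + s}$ for a bounded shift $s$, condition (iii) will hold for all sufficiently large $n$ provided $|s|$ is bounded, because $f''$ is tiny; so it remains to choose, for each large $n$, an integer $N$ in a window of length (say) $1/\e_1$ around $g(n)$ with $\fpa{f(N)} < \e_0$.

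To get that last point I would apply Theorem \ref{thm:hardy:equidist} to the single function $f$: the hypotheses $x \prec f(x) \prec x^2$ guarantee that for every nonzero integer $c$ and every rational polynomial $p$, $|cf(x) - p(x)| \to \infty$ (since $cf(x)$ grows faster than $x$ but is not asymptotic to any polynomial of degree $\leq 1$, and slower than $x^2$ so it is not asymptotic to a higher-degree polynomial either --- one checks the cases on the degree of $p$ exactly as in the proof of Lemma \ref{lem:pint}), so $(f(n) \bmod 1)_{n}$ is dense in $\RR/\ZZ$. Density alone, however, only places \emph{one} good value of $n$ somewhere, not one in every window $[g(n), g(n) + 1/\e_1]$; this is the main obstacle. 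I expect to resolve it by upgrading to the uniform-distribution statement in the Remark after Theorem \ref{thm:hardy:equidist}: since $|cf(x) - p(x)|$ in fact grows at least like a positive power of $x$, hence faster than $\log x$, the sequence $(f(n)\bmod 1)$ is uniformly distributed, and therefore every sufficiently long interval of integers contains an $n$ with $\fpa{f(n)} < \e_0$. Choosing $n_0$ large enough that $g(n_0)$ exceeds $x_2$ \emph{and} that the window length $1/\e_1$ (or a fixed fraction thereof, to leave room for the $f''$ error in (iii)) exceeds the "equidistribution threshold" for density $\e_0$, we obtain for each $n \geq n_0$ an integer $N$ in that window with all four conditions holding. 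I would close by noting that the dependence of $n_0$ on $x_1, \e_0, \e_1, \e_2$ is exactly as claimed, and that $\e_1$-closeness in (iii) survives the passage from $g(n)$ to the nearby integer $N$ because $|f'(N) - n| = |f'(N) - f'(g(n))| \leq |N - g(n)| \sup |f''| \leq (1/\e_1)\cdot o(1) < \e_1$ for large $n$.
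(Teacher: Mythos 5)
Your strategy differs from the paper's in an essential way, and the difference conceals a genuine gap. The paper does \emph{not} reach condition (ii) by equidistribution; it argues locally and elementarily. It places the window around points where $f'(X_i) = n + i\e_1/5$, uses the mean value theorem together with the smallness of $f''$ to show that the window of integers is long (length at least $10/\e_1$) and that on this window the increments $f(N+1)-f(N)-n$ lie in the interval $[\e_1/5, 3\e_1/5]$ --- that is, both \emph{small} and \emph{bounded away from zero}. Consequently $\tilde f(N) = f(N)-nN$ advances steadily and traverses a range of length at least $2$, so an intermediate-value/pigeonhole argument places some $\tilde f(N)$ in any prescribed arc of length $2\e_0$ (after arranging $\e_1 < \tfrac{10}{3}\e_0$). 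No equidistribution theorem is invoked at this step; the whole point is that the lemma needs a \emph{local} statement and the MVT delivers it directly.

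The gap in your argument is the step ``uniformly distributed, therefore every sufficiently long interval of integers contains an $n$ with $\fpa{f(n)} < \e_0$.'' Uniform distribution (which is what Theorem~\ref{thm:hardy:equidist} and the following Remark provide) is an assertion about densities over initial segments $[1,N]$; it does \emph{not} control the behaviour on an arbitrary window $[N, N+M)$, even when $M$ is large. The property you need is \emph{well-distribution} (uniformity of the discrepancy in the starting point $N$), which is strictly stronger and is not stated in the paper. Without it, for all you know the sequence $(f(n)\bmod 1)$ could linger outside the target arc over precisely the windows where $f'\approx n$. There is also a secondary slip: you fix the window length at $1/\e_1$, but since $\sup|f''|$ on the window tends to $0$, the window on which $|f'(N)-n|<\e_1$ actually has length $\e_1/\sup|f''| \to \infty$; this growing length is what the paper exploits (to guarantee $\tilde f$ traverses a range of at least $2$), and is what would be needed even to \emph{invoke} well-distribution if one had it. To repair your argument along its present lines you would need a quantitative well-distribution result for Hardy field sequences; the paper's MVT argument sidesteps the need for any such input.
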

\begin{proof}
	Decreasing $\e_0,\e_1,\e_2$ if necessary (in this order), we may freely assume that $\e_0,\e_1,\e_2 < 1/10$ and we have the inequalities:
\begin{align*}
	\e_1 &< \frac{10}{3} \e_0,&
	\frac{\e_1}{\e_2} &> \frac{25}{\e_1} + 5 > 50.
\end{align*}
(The motivation behind these inequalities will become apparent in the course of the argument). Picking a sufficiently large value of $n_0$ we may freely assume that there exists $X_0 \geq x_1$ such that $f'(X_0) = n$, $f''(X_0) < \e_2$, and $f,f',f''$ are strictly monotone on $[X_0,\infty)$. Let $X_1,X_2,X_3$ be specified by
	\begin{align*}
	f'(X_1) &= n + \frac{1}{5} \e_1,&
	f'(X_2) &= n + \frac{2}{5} \e_1,&
	f'(X_3) &= n + \frac{3}{5} \e_1.&	
	\end{align*}
	Let also $N_1 = \ceil{X_1} < X_1 + 1$ and $N_3 = \floor{X_3} > X_3 - 1$. 
	Note that $f'(X_1) - f'(X_0) = \e_1/5$ and $f''(x) \leq f''(X_0) \leq \e_2$ for $x \in [X_0,X_1]$, so the mean value theorem implies that
	\begin{align*}
	X_1 - X_0 \geq  \frac{\e_1}{5\e_2} \geq 10.
	\end{align*}		
	By the same token, we have $X_2-X_1, X_3-X_2 \geq 10$, which in particular implies that $N_3-N_1 \geq 10$ (which is not strictly speaking necessary, but ensures that all the points under consideration appear in the order we expect). 	
	For all integers $N$ with $N_1 \leq N \leq N_3$ we have:
\begin{align*}
N &\geq x_1,&
n+\frac{1}{5}\e_1 &\leq f'(N) \leq n+\frac{3}{5}\e_1,&
0 < f''(N) \leq \e_2,&
\end{align*}	
which implies that $N$ satisfies three out of the four desired conditions.

It remains to show that there exists $N$ with $N_1 \leq N \leq N_3$ such that $\fpa{f(N)} < \e_0$. We will in fact show that for each arc $A \subset \RR/\ZZ$ with length $2\e_0$ there exists $N$ with $N_1 \leq N \leq N_3$ such that $f(N) \bmod 1 \in A$. Consider the function $\tilde f(N) = f(N) - nN$. Since $\tilde f(N) \equiv f(N) \bmod 1$, in the previous condition we may equally well require that $\tilde f(N) \bmod 1 \in A$. By the mean value theorem, for $N_1 \leq N < N_3$, the value of the expression 
\begin{align*}
	\tilde f(N+1) - \tilde f(N) = f(N+1) - f(N) - n 
\end{align*}
belongs to the interval $\left[\frac{1}{5} \e_1, \frac{3}{5} \e_1 \right]$.
Since $3\e_1/5 < 2\e_0$, it follows that, for each interval $I$ of length $2\e_0$ contained in the interval $[f(N_1),f(N_3)]$, there exists some $N$ with $N_1 \leq N \leq N_3$ with $\tilde f(N)$ intersecting $I$. By the mean value theorem, we have 
\begin{align*}
	N_3 - N_1 \geq X_3 - X_1 - 2 &\geq 
	\frac{f'(X_3) - f'(X_1)}{\max_{X_1 \leq x \leq X_3} f''(x) } - 2 \\
	&= \frac{f'(X_3) - f'(X_1)}{ f''(X_1) } - 2 \geq  \frac{2\e_1}{5\e_2} - 2 \geq \frac{10}{\e_1}.
\end{align*}
 
As a consequence, using the mean value theorem yet again, we get
\begin{align*}
	f(N_3) - f(N_1) &\geq (N_3-N_1) \min_{N_1 \leq x \leq N_3} f'(x) 
	\\ & = (N_3-N_1) f'(N_1) 
	\geq \frac{10}{\e_1} \cdot \frac{\e_1}{5} = 2.
\end{align*}
	Thus, we can find an interval $\tilde A \subset [f(N_1),f(N_3)]$ such that $\tilde A \bmod 1 = A$, and we can find $N$ with $N_1 \leq N \leq N_3$ with $\tilde f(N) \in \tilde A$. This completes the argument.
\end{proof}

Combining Lemmas \ref{lem:lambda-conditions-new} and \ref{lem:lambda-sufficient}, we immediately obtain the following consequence.

\begin{corollary}\label{cor:lambda-existence}
For each $M$ there exists $n_0$ such that for all $n \geq n_0$ there exists $N$ such that $\lambda(N,M,n)$ holds.
\end{corollary}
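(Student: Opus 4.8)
The plan is simply to chain the two preceding lemmas. Fix $M \in \NN$. First I would invoke Lemma \ref{lem:lambda-conditions-new} to obtain a threshold $x_1 \geq x_0$ with the property that $\lambda(N,M,n)$ holds for every triple $(N,M,n)$ of natural numbers satisfying the four conditions listed there, namely $N \geq x_1$, $\fpa{f(N)} < 1/4$, $\abs{f'(N)-n} < 1/(8M)$, and $f''(N) < 1/(8M^2)$.

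Next I would apply Lemma \ref{lem:lambda-sufficient} with this value of $x_1$ together with the choices $\e_0 = 1/4$, $\e_1 = 1/(8M)$, and $\e_2 = 1/(8M^2)$. This yields an integer $n_0$ (depending on $M$) such that for every $n \geq n_0$ there is an integer $N$ satisfying $N \geq x_1$, $\fpa{f(N)} < 1/4$, $\abs{f'(N)-n} < 1/(8M)$, and $f''(N) < 1/(8M^2)$ --- which are precisely the hypotheses of Lemma \ref{lem:lambda-conditions-new}. Consequently $\lambda(N,M,n)$ holds for this $N$, which is exactly the assertion of the corollary.

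Since this is a direct composition of two results already in hand, I do not expect any genuine obstacle. The only points requiring a moment's care are that the tolerances $\e_0,\e_1,\e_2$ supplied to Lemma \ref{lem:lambda-sufficient} must be matched to the bounds demanded by Lemma \ref{lem:lambda-conditions-new}, and that the integer $N$ produced by Lemma \ref{lem:lambda-sufficient} is permitted to depend on $n$ (as it must, since $f'(N)$ has to be within $1/(8M)$ of $n$); both are automatically accommodated by the formulations above, so the argument is immediate.
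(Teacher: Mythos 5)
Your proof is correct and is essentially what the paper intends: the paper's entire justification is the single sentence ``Combining Lemmas \ref{lem:lambda-conditions-new} and \ref{lem:lambda-sufficient}, we immediately obtain the following consequence.'' Your explicit choice of $\e_0 = 1/4$, $\e_1 = 1/(8M)$, $\e_2 = 1/(8M^2)$ matched to the thresholds in Lemma \ref{lem:lambda-conditions-new} is exactly the intended instantiation.
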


\subsection{$\checkmark$Restricted multiplication}
In the previous subsection, we established sufficient conditions for $\lambda(N,M,n)$ to hold. We now show how we can use $\lambda$ to define a restricted form of multiplication. Towards this end, we consider the condition $\mu_0(n,m,p)$ given by 
\begin{equation*}
\exists N \ \exists (0 \leq m < M) \ \lambda(N,M,n) \wedge (a(N+m) = a(N) + p).
\end{equation*}
Recalling that $\lambda(N,M,n)$ implies that $\nint{f}(N+m) = \nint{f}(N) + nm$, we immediately see that if $\mu_0(n,m,p)$ holds then $p = nm$. Thus, $\mu_0$ defines multiplication on the (definable) set $D_0 \subset \ZZ \times \ZZ$ consisting of the pairs $(n,m)$ such that $\mu_0(n,m,p)$.Corollary \ref{cor:lambda-existence} now translates into the following statement.

\begin{corollary}\label{cor:mu-existence}
For each $m \in \NN$ there exists $t_m$ such that $[t_m,\infty) \times \{m\} \subset D_0$.
\end{corollary}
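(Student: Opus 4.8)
The plan is to read this off directly from Corollary~\ref{cor:lambda-existence} by choosing the window length $M$ just large enough to accommodate the coordinate $m$, and then unwinding the definitions of $\lambda$ and $\mu_0$.

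Concretely, I would fix $m \in \NN$ and set $M = m+1$, so that in particular $0 \leq m < M$. Applying Corollary~\ref{cor:lambda-existence} with this value of $M$ produces an integer $n_0$ (depending on $m$) such that for every $n \geq n_0$ there exists some $N$ with $\lambda(N,M,n)$. For such $N$, recall the routine observation recorded just after the definition of $\lambda$: $\lambda(N,M,n)$ implies $\nint{f}(N+j) - \nint{f}(N) = jn$ for all $0 \leq j < M$. Taking $j = m$ gives $\nint{f}(N+m) = \nint{f}(N) + mn$, so, with $p = mn$, the pair $(N,M)$ witnesses that $\mu_0(n,m,p)$ holds; hence $(n,m) \in D_0$. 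Since this holds for every $n \geq n_0$, setting $t_m = n_0$ yields $[t_m,\infty) \times \{m\} \subset D_0$, as claimed.

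The argument is essentially bookkeeping, and the only point requiring any care is the quantifier order in $\mu_0$: the window length $M$ must be fixed as $m+1$ \emph{before} invoking Corollary~\ref{cor:lambda-existence}, because the threshold $n_0$ returned there depends on $M$. There is no substantive obstacle here — all the real work of the near-linear case has already been done in Lemmas~\ref{lem:lambda-conditions-new} and~\ref{lem:lambda-sufficient}, which guarantee that arithmetic progressions with arbitrarily large step $n$ (and any prescribed length) genuinely occur among the values of $\nint{f}$.
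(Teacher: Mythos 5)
Your argument is correct and is precisely the unwinding the paper has in mind when it says Corollary~\ref{cor:lambda-existence} ``translates into'' Corollary~\ref{cor:mu-existence}: fix $M=m+1$, invoke Corollary~\ref{cor:lambda-existence} to get the threshold $n_0 = t_m$, and read membership in $D_0$ off from the definition of $\mu_0$ using the observation that $\lambda(N,M,n)$ forces $\nint{f}(N+m)=\nint{f}(N)+mn$. Same approach, correctly carried out.
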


Our next goal is to extend the definition of multiplication from $D_0$ to all of $\ZZ \times \NN$. 

\subsection{$\checkmark$Difference sets and total multiplication}

Finally, we show that we can extend the restricted multiplication defined by $\mu_0$ using difference sets. Informally, we rely on the fact that once we have defined $n m$ and $n' m$ we can define $(n-n')m$ by distributivity. To this end, define the following formula:
\begin{align*}
\mu_1 (n,m,p) &\defeq \exists n',n'',p',p''\ (\mu_0(n',m,p') \wedge \mu_0(n'',m,p'')\\\nonumber &\wedge n = n''-n' \wedge p = p''-p').
\end{align*}
This lets us prove the following lemma.

\begin{lemma}
\label{lem:general-multiplication}
Let $m \in \NN$ and $n,p \in \ZZ$. Then the formula $\mu_1(n,m,p)$ holds if and only if $nm = p$.
\end{lemma}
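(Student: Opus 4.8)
The plan is to prove both directions directly, relying on just two facts already established: the remark immediately following the definition of $\mu_0$, namely that $\mu_0(n,m,p)$ implies $p = nm$; and Corollary \ref{cor:mu-existence}, which guarantees that for each fixed $m \in \NN$ there is a threshold $t_m$ such that $\mu_0(n,m,nm)$ holds for every integer $n \geq t_m$.

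First I would treat the forward implication. Assuming $\mu_1(n,m,p)$, fix witnesses $n', n'', p', p''$ with $\mu_0(n',m,p')$, $\mu_0(n'',m,p'')$, $n = n'' - n'$, and $p = p'' - p'$. The remark above gives $p' = n'm$ and $p'' = n''m$, whence $p = p'' - p' = (n'' - n')m = nm$, as required.

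Next I would treat the backward implication. Suppose $nm = p$. Since $n$ is a fixed integer, pick an integer $n' \geq \max(t_m,\, t_m - n)$ and set $n'' = n' + n$, so that both $n' \geq t_m$ and $n'' \geq t_m$. By Corollary \ref{cor:mu-existence}, the formulas $\mu_0(n', m, n'm)$ and $\mu_0(n'', m, n''m)$ both hold; taking $p' = n'm$ and $p'' = n''m$ we get $n'' - n' = n$ and $p'' - p' = (n'' - n')m = nm = p$, so these four values witness $\mu_1(n,m,p)$.

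I do not expect any real obstacle here: this is the standard ``difference set'' manipulation turning a multiplication defined on a tail of its first argument into one defined everywhere. The only point requiring a little care is that $\mu_0(\cdot, m, \cdot)$ is defined on the \emph{full} tail $[t_m,\infty)$ of its first argument (exactly the content of Corollary \ref{cor:mu-existence}), which is what allows us to shift $n'$ above $t_m$ regardless of the sign of $n$ while keeping $n'' = n' + n$ in the tail as well.
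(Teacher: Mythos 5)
Your proof is correct and follows essentially the same approach as the paper's: the forward direction is the same unwinding of $\mu_1$ via the remark that $\mu_0(n,m,p)$ forces $p=nm$, and the backward direction is the same shift-into-the-tail argument using Corollary~\ref{cor:mu-existence}. One small point in your favour: the paper takes $n' = t_m$ and $n'' = t_m + n$ outright, which only lands $n''$ in $[t_m,\infty)$ when $n \geq 0$, even though the lemma allows $n \in \ZZ$. Your choice $n' \geq \max(t_m,\, t_m - n)$ handles negative $n$ explicitly and closes that gap.
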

\begin{proof}
For the rightwards direction of proof, we know already that if $\mu_0(n,m,p)$ holds then $nm = p$. Given that $n'm = p'$ and $n''m=p''$, with $n = n''-n'$ and $p = p''-p'$, we can show by a routine computation that $nm = p$. So the rightwards direction of proof is shown.

For the leftwards direction of proof, given that $nm = p$, we fix $m$. By Corollary \ref{cor:mu-existence}, we know there exists some natural $t_m$ such that $[t_m, \infty) \times \{m\} \subseteq D_0$. Thus we know that multiplication on pairs $(n_1,m)$ is defined by $\mu_0$ when $n_1 > t_m$. So take $n' = t_m$ and $n'' = t_m + n$, and likewise take $p' = n'm$ and $p'' = n''m$. It follows that $\mu_0(n',m,p')$ and $\mu_0(n'',m,p'')$ both hold, and further that $n = n''-n'$ and $p = p''-p'$ hold. Therefore $\mu_1(n,m,p)$ holds, as required.
\end{proof}

Once multiplication is defined on $\NN \times \ZZ$ is it immediate to extend it to $\ZZ \times \ZZ$. Thus, by Lemma \ref{lem:general-multiplication}, we can defined multiplication over the whole domain of $\ZZ \times \ZZ$ in the theory $\theory{\ZZ; <, +, \nint{f}}$. It follows that the theory $\theory{\ZZ; <, +, \nint{f}}$ is undecidable, completing the proof of Theorem \ref{thm:main}.

\section{$\checkmark$Proof of the sub-linear case}

We now move to prove Theorem \ref{thm:other}. Recall that here we assume $f$ grows at least as fast as $x^\e$ for some $\e > 0$ but more slowly than $x$. Using this, we define a function which approximates $\inv{f}(x-(1/2))+(1/2)$. Given that the latter function is a Hardy field function which grows faster than $x$, we apply Theorem \ref{thm:main} and show that the theory $\theory{\ZZ; <, +, f}$ is undecidable.

Note that throughout we use the fact that, if $f$ is a Hardy field function, then so is $\inv{f}$; we take this fact from \cite[Theorem 1.7]{AD-2004}.

\begin{proof}[Proof of Theorem \ref{thm:other}] 
Let $f \colon [x_0, \infty) \to \RR$ be a Hardy field function where
\begin{eqnarray*}
\label{eqn:sub-linear-growth}
 x^{\e} \prec f(x) &\text{ and } & f(x) \prec x.
\end{eqnarray*}
both hold for some $\e > 0$.

Define the function $b(m) = \min \set{ n \in \ZZ }{ \nint{f}(n) \geq m}$, which behaves roughly like the inverse of $\nint{f}$. Formally, we define $b(m)$ in our theory as the unique value of $n$ satisfying the formula
\begin{equation*}
\nint{f}(n) \geq m \wedge \forall k\ (\nint{f}(k) \geq m \rightarrow k \geq n).
\end{equation*}
Increasing the value of $x_0$ if necessary, we may freely assume that $f$ is strictly increasing.
Additionally, we define the function $g \colon [y_0, \infty) \to \RR$ by $g(y) = \inv{f}(y - (1/2)) + (1/2)$, where $y_0$ is sufficiently large for the above definition to be well-posed. The main lemma we need to prove Theorem \ref{thm:other} is the following.

\begin{lemma}
\label{lem:agreement-of-inverses}
There exists an integer $m_0$ such that $b(m) = \nint{g(m)}$ for all integers $m \geq m_0$.
\end{lemma}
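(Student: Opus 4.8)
The plan is to show that $b(m)$ and $\nint{g(m)}$ agree by exploiting the fact that $b$ is, up to rounding, the inverse of $\nint{f}$, and that $g$ is, by construction, the inverse of the ``shifted'' function $x \mapsto f(x+1/2)-1/2$ composed with appropriate half-integer shifts. First I would pin down the basic monotonicity and growth facts: increasing $x_0$ and $y_0$, I may assume $f$ is strictly increasing on $[x_0,\infty)$, hence $\inv f$ is a well-defined strictly increasing Hardy field function on its domain (by \cite[Theorem 1.7]{AD-2004}), and $g(y) = \inv f(y-1/2)+1/2$ is strictly increasing as well. Since $f(x) \prec x$, the inverse satisfies $\inv f(x) \succ x$, so $g$ grows faster than linearly; this is exactly what will let us invoke Theorem \ref{thm:main} later, but for the present lemma the relevant consequence is just that $g$ (and hence $\nint g$) is eventually strictly increasing and tends to infinity, so that $\nint g$ takes each large integer value on a contiguous block of inputs, mirroring the staircase structure of $b$.

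Next I would translate the defining property of $b(m)$ into a statement about $f$. By definition $b(m)$ is the least integer $n$ with $\nint{f}(n) \geq m$, i.e. with $f(n) \geq m - 1/2$. Since $f$ is strictly increasing, $f(n) \geq m-1/2$ holds exactly when $n \geq \inv f(m - 1/2) = g(m) - 1/2$. Therefore $b(m)$ is the least integer $n$ with $n \geq g(m) - 1/2$, that is, $b(m) = \ceil{g(m) - 1/2} = \nint{g(m)}$, using the paper's convention $\nint{x} = \ceil{x-1/2}$. The only gap in this chain of equalities is that $\inv f$ need not be defined on all of $\RR$ and that $f$ need only be increasing past $x_0$; but for $m$ large enough (larger than some $m_0$ depending on $x_0$ and $y_0$), the value $m-1/2$ lies in the range of $f$ restricted to $[x_0,\infty)$, every integer $n$ we care about exceeds $x_0$, and the equivalence ``$f(n)\geq m-1/2 \iff n \geq \inv f(m-1/2)$'' is valid. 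So the proof reduces to choosing $m_0$ so that these domain issues disappear and then running the one-line computation above.

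The main obstacle I anticipate is purely bookkeeping around the convention that Hardy field functions are extended by $0$ below $x_0$ (and $g$ by $0$ below $y_0$): one must check that for $m \geq m_0$ the ``real'' branch of $f$, not the artificial zero branch, governs the value of $b(m)$, and similarly that $\nint{f}(n)$ for $n < x_0$ never reaches the threshold $m$ once $m$ is large. Concretely, since $f$ is eventually positive and increasing to $\infty$, for large $m$ the set $\{n : \nint f(n) \geq m\}$ is a final segment of $[x_0,\infty)\cap\ZZ$, so the ``$\min$'' in the definition of $b$ is attained at a genuinely large $n$, and there the identity $f(n)\geq m-1/2 \iff n \geq g(m)-1/2$ applies. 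Once this is checked the conclusion $b(m) = \ceil{g(m)-1/2} = \nint{g(m)}$ is immediate, and the lemma follows.
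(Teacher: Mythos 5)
Your proof is correct and follows essentially the same route as the paper's: translate $\nint{f}(n)\geq m$ into $f(n)\geq m-\tfrac12$, invert $f$ to obtain $n\geq g(m)-\tfrac12$, and read off $b(m)=\ceil{g(m)-\tfrac12}=\nint{g(m)}$, with the large-$m_0$ bookkeeping handling domain and monotonicity issues exactly as in the paper. (Your parenthetical remark that $g$ is the inverse of $x\mapsto f(x+1/2)-1/2$ has the shifts in the wrong direction --- the correct inverse of $g$ is $x\mapsto f(x-1/2)+1/2$ --- but this is an inessential heuristic aside and does not affect the argument.)
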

\begin{proof}
Recall that we assume that $f \colon [x_0, \infty) \to \RR$ is a strictly increasing function. Under this assumption, the inverse $\inv{f} : [f(x_0), \infty) \to \RR$ can be characterised as  $\inv{f}(y) = \min \set{x \in \RR}{f(x) \geq y}$. For an integer $m \geq m_0$ and real $x$ we have $f(x) \geq m-1/2$ if and only if $\nint{f}(x) \geq m$, directly from the definition of the nearest integer function. Thus, we know that
\begin{align*}
 \set{x \in \RR}{\!\nint{f}(x) \geq m} &= \set{x \in \RR}{\!f(x) \geq\!m-\frac{1}{2}} 
 \\ &= \set{\!x \in \RR}{\!x \geq f^{-1}\bra{m-\frac{1}{2}}} = \set{\!x \in \RR}{\!x \geq g(m) - \frac{1}{2}}.
\end{align*} 
Restricting our attention to integers, we conclude that
\begin{align*}
	\set{n \in \ZZ}{ n \geq b(m)} &=
	\set{n \in \ZZ}{ \nint{f(n)} \ge m} 
	\\ & = \set{n \in \ZZ}{n + \frac{1}{2} \geq g(m)} 
	= \set{n \in \ZZ}{n \geq \nint{g(m)}}.	
\end{align*}
 Thus, we have $b(m) = \nint{g(m)}$, as needed.
\end{proof}

Now, $g$ is a Hardy field function, as $\inv{f}$ is.  Thus, Theorem \ref{thm:main} can be applied to show the theory $\theory{\ZZ; <, +, g}$ undecidable. We do this using the following lemma.

\begin{lemma}
\label{lem:fast-growing-inverse}
We have $g(x) \prec x$ and $g(x) \succ x^{1/\e}$.
\end{lemma}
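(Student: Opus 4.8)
The plan is to read off the growth of $g$ from the growth of $f$ through the identity $\inv{f}(f(x)) = x$, applied to the defining formula $g(y) = \inv{f}(y-(1/2)) + (1/2)$. Since the conventions of this section let us assume $f$ is strictly increasing on $[x_0,\infty)$ with $f(x)\to\infty$, the inverse $\inv{f}$ is a well-defined strictly increasing function tending to infinity, and by \cite[Theorem 1.7]{AD-2004} it — and hence $g$ — is a Hardy field function; in particular $\lim_{y\to\infty} g(y)/y$ and $\lim_{y\to\infty} g(y)/y^{1/\e}$ exist in $[0,\infty]$ by Lemma \ref{lem:eventually-increasing-or-decreasing}, so it suffices to estimate these two ratios.

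First I would translate the hypotheses $x^{\e}\prec f(x)\prec x$ into statements about $\inv{f}$. Substituting $y=f(x)$, so that $y\to\infty$ iff $x\to\infty$ and $\inv{f}(y)/y = x/f(x)$, the bound $f(x)\prec x$ (that is, $f(x)/x\to 0$) is exactly $\inv{f}(y)/y\to\infty$; likewise, raising $x^{\e}/f(x)\to 0$ to the power $1/\e$ yields $x/f(x)^{1/\e}\to 0$, which is exactly $\inv{f}(y)/y^{1/\e}\to 0$. Hence $x\prec\inv{f}(x)\prec x^{1/\e}$.

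Second I would transfer these bounds from $\inv{f}$ to $g$, which only requires noting that the constant shift by $1/2$ is harmless. By monotonicity of $\inv{f}$ we have $g(y)/y \ge \inv{f}(y-(1/2))/y = \bigl(\inv{f}(y-(1/2))/(y-(1/2))\bigr)\cdot\bigl((y-(1/2))/y\bigr)\to\infty$, and $g(y)/y^{1/\e} = \bigl(\inv{f}(y-(1/2))+(1/2)\bigr)/y^{1/\e} \le \bigl(\inv{f}(y)+(1/2)\bigr)/y^{1/\e}\to 0$. Thus $g$ obeys the same asymptotic bounds as $\inv{f}$, namely $x\prec g(x)\prec x^{1/\e}$; these are precisely the two conditions under which Theorem \ref{thm:main} applies (taking $d$ to be any integer exceeding $1/\e$, which is automatically at least $2$), yielding undecidability of $\theory{\ZZ; <, +, g}$.

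I do not expect a real obstacle here: the argument is an unwinding of the definitions of $g$ and $\inv{f}$ together with the two hypotheses on $f$. The only place meriting attention is the handling of $\inv{f}$ — its well-definedness and strict monotonicity (guaranteed by enlarging $x_0$ so that $f$ is strictly increasing) and its status as a Hardy field function (so that $g$ is Hardy and the ratios above have genuine limits), both supplied by \cite[Theorem 1.7]{AD-2004} — and the elementary but essential remark that shifting the argument of the polynomially bounded function $\inv{f}$ by the constant $1/2$ leaves its order of growth unchanged.
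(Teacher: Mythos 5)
Your proof is correct and follows essentially the same route as the paper's: both transfer the growth hypotheses on $f$ to $\inv{f}$ through the inverse-function relation (the paper via $f(n) < \delta n \Rightarrow \inv{f}(m) > m/\delta$, you via the substitution $y = f(x)$), and then note that the constant shift by $1/2$ in the definition of $g$ does not affect the order of growth. One remark: what you establish is $x \prec g(x) \prec x^{1/\e}$, which under the conventions of Section 2 (where $a \prec b$ means $a/b \to 0$) is the \emph{reverse} of the lemma's literal wording; this is the correct and intended content --- the inequalities as printed in the lemma statement (and echoed at the end of the paper's own proof, which derives $\inv{f}(m) > m/\delta$ and then writes ``$\inv{f}(x) \prec x$'') have $\prec$ and $\succ$ swapped, and your version is the one actually required to apply Theorem \ref{thm:main}.
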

\begin{proof}
Increasing $x_0$ if necessary, we may freely assume that $f$ is strictly increasing on $[x_0,\infty)$. 
We know that for each $\delta > 0$ there exists $n_0 \geq x_0$ such that $n \geq n_0$ we have $f(n) < \delta n$. Taking $m_0 = f(n_0)$, it follows that for all $m \geq m_0$ we have $\inv{f}(m) > (1/\delta) m$. Thus, $\inv{f}(x) \prec x$ and consequently $g(x) \prec x$, as needed. The proof that $g(x) \succ x^{1/\e}$ is entirely analogous.
\end{proof}

By Lemma \ref{lem:fast-growing-inverse} the theory $\theory{\ZZ; <, +, g}$ is undecidable. Since by Lemma \ref{lem:agreement-of-inverses} we can define $\nint{g}$ in $\theory{\ZZ; <, +, \nint{f}}$, it follows that the theory $\theory{\ZZ; <, +, \nint{f}}$ is undecidable as well. This concludes the proof of Theorem \ref{thm:other}.
\end{proof} 

\bibliographystyle{alphaabbr}
\bibliography{bibliography}

\begin{thebibliography}{AvdD04}

\bibitem[AvdD04]{AD-2004}
M.~Aschenbrenner and L.~van~den Dries.
\newblock Asymptotic differential algebra.
\newblock {\em Contemporary Mathematics}, 373:49--85, 2004.

\bibitem[B{\`e}s01]{Bes-2001}
A.~B{\`e}s.
\newblock A survey of arithmetical definability.
\newblock Number suppl., pages 1--54. 2001.
\newblock A tribute to Maurice Boffa.

\bibitem[Bos81]{Boshernitzan-1981}
M.~D. Boshernitzan.
\newblock An extension of {H}ardy’s class $l$ of ``orders of infinity''.
\newblock {\em Journal d’Analyse Mathématique}, 39(1):235–255, December
  1981.

\bibitem[Bos94]{Boshernitzan-1994}
M.~D. Boshernitzan.
\newblock Uniform distribution and {H}ardy fields.
\newblock {\em J. Anal. Math.}, 62:225--240, 1994.

\bibitem[Chu36]{Church-1936}
A.~Church.
\newblock An unsolvable problem of elementary number theory.
\newblock {\em American Journal of Mathematics}, 58(2):345--363, 1936.

\bibitem[Fra09]{Frantzikinakis-2009}
N.~Frantzikinakis.
\newblock Equidistribution of sparse sequences on nilmanifolds.
\newblock {\em J. Anal. Math.}, 109:353--395, 2009.

\bibitem[GKP90]{GKP-1990}
R.~L. Graham, D.~E. Knuth, and O.~Patashnik.
\newblock {\em Concrete mathematics}.
\newblock Addison-Wesley, Reading, Mass ;, 1990.

\bibitem[KM24]{KoniecznyMullner-2024}
J.~Konieczny and C.~M{\"{u}}llner.
\newblock Bracket words along {H}ardy field sequences.
\newblock {\em Ergodic Theory Dynam. Systems}, 44(9):2621--2648, 2024.

\bibitem[Kon24]{Konieczny-2024}
J.~Konieczny.
\newblock Decidability of extensions of presburger arithmetic by generalised
  polynomials, 2024.

\bibitem[Kor01]{Korec-2001}
I.~Korec.
\newblock A list of arithmetical structures complete with respect to the
  first-order definability.
\newblock volume 257, pages 115--151. 2001.
\newblock Weak arithmetics.

\bibitem[Pre29]{Presburger-1929}
M.~Presburger.
\newblock {\"U}ber die {V}ollst\"andigkeit eines gewissen systems der
  {A}rithmetik ganzer {Z}ahlen, in welchem die {A}ddition als einzige
  {O}peration hervortritt.
\newblock {\em Comptes Rendus du I congres de Mathematiciens des Pays Slaves},
  pages 92--101, 1929.

\bibitem[Ric22]{Richter-2022}
F.~K. Richter.
\newblock Uniform distribution in nilmanifolds along functions from a hardy
  field.
\newblock {\em Journal d’Analyse Mathématique}, 149(2):421–483, December
  2022.

\bibitem[Sem80]{Semenov-1979}
A.~L. Semenov.
\newblock On certain extensions of the arithmetic of addition of natural
  numbers.
\newblock {\em Mathematics of the USSR-Izvestiya}, 15(2):401, apr 1980.

\bibitem[Sem84]{Semenov-1983}
A.~L. Semënov.
\newblock Logical theories of one-place functions on the set of natural
  numbers.
\newblock {\em Mathematics of the USSR-Izvestiya}, 22(3):587, jun 1984.

\bibitem[Tar53]{Tarski-1953}
A.~Tarski.
\newblock {\em Undecidability of the Elementary Theory of Groups}, volume~13 of
  {\em Studies in Logic and the Foundations of Mathematics}, pages 75--86.
\newblock Elsevier, 1953.

\end{thebibliography}
\end{document}